\newtheorem{assumption}{Assumption}
\newtheorem{theorem}{Theorem}[section]
\newtheorem{lemma}{Lemma}[section]
\newtheorem{proposition}{Proposition}[section]
\newtheorem{remark}{Remark}
\newcommand{\argmin}{\operatornamewithlimits{arg\,min}}
\def \tr{\text{tr}}
\def \expect{\mathbb{E}}
\def \prob{\mathbb{P}}
\def \real{\mathbb{R}}
\def \diag{\text{diag}}
\DeclarePairedDelimiter\smallnorm{\lVert}{\rVert}
\providecommand{\norm}[1]{\left\lVert#1\right\rVert}
\def \({\left(}
\def \){\right)}
\def \[{\left[}
\def \]{\right]}
\def \M{M_0}
\def \U{U}
\def \V{V}
\def \Lam{\Lambda}
\def \lam{\lambda}
\def \Mf{M_F} 
\def \Mp{M} 
\def \Sig{\Sigma}
\def \y{y}
\def \p{p}
\def \sig{\sigma}
\def \eps{\epsilon}
\def \Mtilde{\widetilde{M}}
\newcommand{\balp}{\boldsymbol{\mathbf{\alpha}}}
\def \alphatilde{\widetilde{\balp}}
\def \phat{\hat{p}}
\def \Mhat{\hat{M}}
\def \lamhat{\hat{\lam}}
\def \Uhat{\hat{U}}
\def \Vhat{\hat{V}}
\def \s{s}
\def \shat{\hat{s}}
\def \n{n}
\def \d{d}
\def \rank{r}
\def \O{\mathcal{O}}
\def \Q{\mathcal{Q}}
\def \P{\mathcal{P}_{\Omega}}
\def \Portho{\mathcal{P}_{\Omega}^\perp}
\def \I{\mathbbm{1}} 
\def \bu{\mathbf{u}}
\def \bv{\mathbf{v}}
\newcommand{\blam}{\boldsymbol{\mathbf{\lambda}}}
\newcommand{\btau}{\boldsymbol{\mathbf{\tau}}}
\def \AI{\emph{Adaptive-Impute}}
\def \AIs{\emph{Adaptive-Impute }}
\def \SI{\emph{softImpute}}
\def \SIs{\emph{softImpute }}
\newcommand{\blind}{0}
\begin{document}

\def\spacingset#1{\renewcommand{\baselinestretch}%
{#1}\small\normalsize} \spacingset{1}


\if0\blind
{
  \title{\bf Intelligent Initialization and Adaptive Thresholding for Iterative Matrix Completion; Some Statistical and Algorithmic Theory for \AI}
  \author{Juhee Cho, Donggyu Kim, and Karl Rohe\thanks{
    This research is supported by NSF grant DMS-1309998 and ARO grant W911NF-15-1-0423.}\hspace{.2cm}\\
    Department of Statistics, University of Wisconsin-Madison\\
}
  \maketitle
} \fi

\if1\blind
{
  \bigskip
  \bigskip
  \bigskip
  \begin{center}
    {\LARGE\bf Intelligent Initialization and Adaptive Thresholding for Iterative Matrix Completion; Some Statistical and Algorithmic Theory for \AI}
\end{center}
  \medskip
} \fi

\bigskip
\begin{abstract}
Over the past decade, various matrix completion algorithms have been developed.
Thresholded singular value decomposition (SVD) is a popular technique in implementing many of them.
A sizable number of studies have shown its theoretical and empirical excellence, but
	choosing the right threshold level still remains as a key empirical difficulty.  
This paper proposes a novel matrix completion algorithm which iterates thresholded SVD 
	with theoretically-justified and data-dependent values of thresholding parameters. 
The estimate of the proposed algorithm enjoys the minimax error rate and shows outstanding empirical performances.
The thresholding scheme that we use can be viewed as a solution to a non-convex optimization problem, 
	understanding of whose theoretical convergence guarantee is known to be limited.
We investigate this problem by introducing a simpler algorithm, generalized-\SI, analyzing its convergence behavior, and 
	connecting it to the proposed algorithm.

\end{abstract}

\noindent%
{\it Keywords:}  \SI, generalized-\SI, non-convex optimization, thresholded singular value decomposition  
\vfill

\newpage
\spacingset{1.45} 
\section{Introduction} \label{intro2}
Matrix completion appears in a variety of areas where it recovers a low-rank or approximately low-rank matrix from a small fraction of observed entries 
	such as collaborative filtering (\cite{rennie2005}), computer vision (\cite{weinberger2006}), positioning (\cite{montanari2010}), and 
		recommender systems (\cite{bennett2007}).
Early work in this field was done by \cite{achlioptas2001}, \cite{azar2001}, \cite{fazel2002}, \cite{srebro2004}, and \cite{rennie2005}.  
Later, \cite{candes2009recht} introduced the technique of matrix completion 
	by minimizing the nuclear norm under convex constraints.
This opened up a significant overlap with compressed sensing (\cite{candes2006}, \cite{donoho2006}) and 
	led to accelerated research in matrix completion. 
They and others (\cite{candes2009recht}, \cite{candes2010tao}, \cite{keshavan2010a}, \cite{gross2011}, \cite{recht2011}) showed that the technique can exactly recover a low-rank matrix in the noiseless case.  
Many of the following works showed the approximate recovery of the low-rank matrix with the presence of noise (\cite{candes2010plan}, \cite{negahban2011}, \cite{koltchinskii2011}, \cite{rohde2011}).
Several other papers studied matrix completion in various settings (e.g. \cite{davenport2014}, \cite{negahban2012}) and 
proposed different estimation procedures of matrix completion (\cite{srebro2004}, \cite{keshavan2009}, \cite{koltchinskii2011solo}, \cite{cai2013}, \cite{chatterjee2014}) than the ones by \cite{candes2009recht}.
In addition to the theoretical advances, a large number of algorithms have emerged (e.g. \cite{rennie2005}, \cite{cai2010}, \cite{keshavan2009}, \cite{mazumder2010}, \cite{hastie2014}).  An overview is well summarized in \cite{mazumder2010} and \cite{hastie2014}. 


Many of matrix completion algorithms employ thresholded singular value decomposition (SVD) which soft- or hard- thresholds the singular values.
The statistical literature has responded by investigating its theoretical optimality and strong empirical performances. 
However, a key empirical difficulty of employing thresholded SVD for matrix completion is to find the right way and level of threshold.
Depending on the choice of the thresholding scheme, 
	the rank of the estimated low-rank matrix and predicted values for unobserved entries can widely change. 
Despite its importance, we lack understanding on how to choose the threshold level and what bias or error we eliminate by thresholding.

We propose a novel iterative matrix completion algorithm, \AI, which 
	recovers the underlying low-rank matrix from a few noisy entries via differentially and adaptively thresholded SVD. 
Specifically, the proposed \AIs algorithm differentially thresholds the singular values and adaptively updates the threshold levels on every iteration. 
As was the case with adaptive Lasso (\cite{zou2006}) and adaptive thresholding for sparse covariance matrix estimation (\cite{cai2011adaptive}),
	the proposed thresholding scheme gives \AIs stronger empirical performances than the thresholding scheme 
	that uses a single thresholding parameter for all singular values throughout the iterations (e.g. \SIs (\cite{mazumder2010})). 
Although \AIs employs multiple thresholding parameters changing over iterations, 
	we suggest specified values for the thresholding parameters that are theoretically-justified and data-dependent.
Hence, \AIs is free of the tuning problems associated with the choice of threshold levels. 
Its single tuning parameter is the rank of the resulting estimator. 
We suggest a way to choose the rank based on singular value gaps (for details, see Section \ref{realdata}).
This novel threshold scheme of \AIs makes it estimation via non-convex optimization, 
	understanding of whose theoretical guarantees is known to be limited. 
However, to solve this problem and help understand the convergence behavior of \AI, 
	we introduce a simpler algorithm than \AI, generalized-\SI, and derive a sufficient condition under which it converges.
Then, we prove that \AIs behaves almost the same as generalized-\SI.
Numerical experiments and a real data analysis in Section \ref{sim} suggest superior performances of \AIs over the existing \SI-type algorithms.

The rest of this paper is organized as follows.
Section \ref{setup} describes the model setup. 
Section \ref{AI} introduces the proposed algorithm \AI. 
Section \ref{generalSI} introduces a generalized-\SI, a simpler algorithm than \AI.
Section \ref{sim} presents numerical experiment results. 
Section \ref{discuss} concludes the paper with discussion. 
All proofs are collected in Section \ref{proofs2}.

\section{The model setup} \label{setup}
Suppose that we have an $\n\times\d$ matrix of rank $\rank$,
\begin{equation} \label{mod}
\M = \U\Lam\V^T,
\end{equation}
where by SVD, 
	$\U=(\U_1,\ldots,\U_\rank) \in \real^{\n \times \rank}$, 
	$\V=(\V_1,\ldots,\V_\rank) \in \real^{\d \times \rank}$, 
	$\Lam = \diag (\lam_1, \ldots, \lam_\rank)\\ \in \real^{\rank \times \rank}$, and
	$\lam_1\ge \ldots\ge \lam_\rank\ge0$.
The entries of $\M$ are corrupted by noise $\eps \in \real^{\n \times \d}$ whose entries are i.i.d. sub-Gaussian random variables with mean zero and variance $\sig^2$. 
Hence, we can only observe $\Mf = \M + \eps$. 
However, oftentimes in real world applications, not all entries of $\Mf$ are observable.
So, define $y \in \real^{\n \times \d}$ such that $y_{ij} = 1$ if the $(i,j)$-th entry of $\Mf$ is observed and $y_{ij} = 0$ if it is not observed.  
The entries of $y$ are assumed to be i.i.d. Bernoulli($p$) and independent of the entries of $\epsilon$.   
Then, the partially-observed noisy low-rank matrix $\Mp \in \real^{\n \times \d}$ is written as
\begin{align*}
\Mp_{ij} = \y_{ij} {\Mf}_{ij}
		=  \left\{\begin{matrix}
				{\M}_{ij}+\eps_{ij} & \text{if observed ($\y_{ij}=1$)}\\ 
				0 &  \text{otherwise ($\y_{ij}=0$).} 
				\end{matrix}\right. 
\end{align*}
Throughout the paper, we assume that $r \ll d \le \n$ and the entries of $\M$ are bounded by a positive constant $L$ in absolute value.
In this paper, we develop an iterative algorithm to recover $\M$ from $\Mp$ and investigate its theoretical properties and empirical performances.

\section{\AIs algorithm} \label{AI}

\subsection{Initialization} \label{initial}

We first introduce some notation.
Let a set $\Omega$ contain indices of the observed entries, 
$y_{ij}=1 \Leftrightarrow (i,j) \in \Omega.$
Then, for any matrix $A \in \real^{\n\times\d}$, denote by $\P(A)$ the projection of $A$ onto $\Omega$ and by $\Portho(A)$ the projection of $A$ onto the complement of $\Omega$;
$$\[\P(A)\]_{ij} = \left\{\begin{matrix}
A_{ij} & \text{if } (i,j) \in \Omega\\ 
0 & \text{if } (i,j) \notin \Omega
\end{matrix}\right.
\quad\text{and}\quad
\[\Portho(A)\]_{ij} = \left\{\begin{matrix}
0 & \text{if } (i,j) \in \Omega\\ 
A_{ij} & \text{if } (i,j) \notin \Omega.
\end{matrix}\right.$$
That is, $\P(A) + \Portho(A) = A$.
We let $\bu_i(A)$ denote the $i$-th left singular vector of $A$, 
	$\bv_i(A)$ the $i$-th right singular vector of $A$, and $\blam_i(A)$ the $i$-th singular value of $A$ 
	such that $\blam_1(A)\ge \ldots\ge \blam_\d(A)$. 
The squared Frobenius norm is defined by $\left \| A \right \| _F ^2 = \tr\(A^T A\)$, the trace of $A^T A$, and
	the nuclear norm by $\left \| A \right \| _\ast = \sum_{i=1}^\d \blam_i(A)$, the sum of the singular values of $A$.
For a symmetric matrix $A \in \real^{\n\times\n}$, diag(A) represents a matrix with diagonal elements of A on the diagonal and zeros elsewhere. 

Many of the iterative matrix completion algorithms (e.g. \cite{cai2010}, \cite{mazumder2010}, \cite{keshavan2009}, \cite{chatterjee2014}) in the current literature initialize with $\Mp$, where the unobserved entries begin at zero. 
This initialization works well with algorithms that are based on convex optimization or that are robust to the initial. 
However, for algorithms that are based on non-convex optimization or that are sensitive to the initial, filling the unobserved entries with zeros may not be a good choice. 
\cite{cho2015} proposed a one-step consistent estimator, $\Mhat$, 
	that attains the minimax error rate (\cite{koltchinskii2011}), $\rank/\p\d$, and requires only two eigendecompositions. 
\AIs employs the entries of this one-step consistent estimator instead of zeros as initial values of the unobserved entries.
Algorithm \ref{alg:initial} describes how to compute the initial $\Mhat$ of \AI.
The following theorem shows that $\Mhat$ achieves the minimax error rate.

\begin{algorithm}[h]
\caption{\;Initialization (\cite{cho2015})}
\begin{algorithmic}
    \Require{$\Mp$, $\y$, and $r$}
    \State 
        $\phat \gets \frac{1}{\n\d}\sum_{i=1}^\n \sum_{j=1}^\d \y_{ij}$
    \State 
        ${\Sig}_{\phat} \gets \Mp^T\Mp - (1-\phat) \diag(\Mp^T\Mp)$
    \State 
        ${\Sig}_{t \phat} \gets \Mp\Mp^T - (1-\phat) \diag(\Mp\Mp^T)$
    \State 
        $\Vhat_i \gets \bv_i({\Sig}_{\phat}), \quad \forall i \in \{1,\ldots,\rank\}$
    \State 
        $\Uhat_i \gets \bu_i({\Sig}_{t \phat}), \quad \forall i \in \{1,\ldots,\rank\}$
    \State 
        $\alphatilde \gets \frac{1}{\d-\rank} \sum_{i=\rank+1}^\d \blam_i({\Sig}_{\phat})$
    \State 
        $\hat\tau_i \gets \blam_i({\Sig}_{\phat}) - \frac{1}{\phat}\sqrt{\blam_i({\Sig}_{\phat}) - \alphatilde}, \quad \forall i \in \{1,\ldots,\rank\}$
    \State 
        $\lamhat_i \gets \blam_i({\Sig}_{\phat}) - \hat\tau_i, \quad\quad\quad\quad\quad\quad\;\;\, \forall i \in \{1,\ldots,\rank\}$
    \State 
        $\shat=(\shat_1,\ldots,\shat_\rank) \gets \argmin_{s \in \{-1,1\}^\rank} \norm{ \P\( \sum_{i=1}^\rank s_i \lamhat_i \Uhat_i \Vhat_i^T - \Mp \) }_F^2$
    \State $\Mhat \gets \sum_{i=1}^\rank \shat_i \lamhat_i \Uhat_i \Vhat_i^{T}$
    \\
    \Return{$\Mhat$}
\end{algorithmic}
\label{alg:initial}
\end{algorithm}

\begin{assumption}\label{assume1}
~
\begin{enumerate}
	\item [(1)] $\p\d/\log\n \to \infty$ and $\n,\d \to \infty$ with $\;\d \le \n \le e^{\d^{\beta}},\,$ where $\;\beta<1$ free of $\n$, $\d$, and $\p$;
	\item [(2)] $\lam_{i} = b_{i} \sqrt {\n\d}$ for all $i =1, \ldots,\rank$, where $\{b_{i}\}_{i =1, \ldots,\rank}$ are positive bounded values;
	\item [(3)] $b_{i} > b_{i+1}$ for all $i =1, \ldots,\rank,$ where $b_{\rank+1}=0$;
	\item [(4)] $\lim_{\n,\d\to\infty} \prob\Big(\min_{\s \in \{-1,1\}^\rank} \; 
	\big\| \P\big( \sum_{i=1}^\rank s_i \lamhat_i \Uhat_i \Vhat_i^T - \Mp \big) \big\|_F^2$ 
	\\ \hspace*{7.8cm}	$< \big\| \P\big( \sum_{i=1}^\rank s_{0i} \lamhat_i \Uhat_i \Vhat_i^T - \Mp \big) \big\|_F^2 \Big) = 0$,
	\\ where $\s=(\s_1,\ldots,\s_\rank)$ and $\s_{0i} = \text{sign}(\langle \Vhat_i, \V_i \rangle) \;\text{sign}(\langle \Uhat_i, \U_i \rangle)$ for $i =1, 		
\ldots,\rank$.
\end{enumerate}
\end{assumption}
\begin{remark}
Under the setting where the rank $\rank$ is fixed as in this paper, 
	Assumption \ref{assume1}(2) implies that the underlying low-rank matrix $\M$ is dense.
More specifically, 
	note that the squared Frobenius norm indicates both the sum of all squared entries of a matrix and the sum of its singular values squared. 
Also, note that $\smallnorm{\M}_F^2 = \sum_{i=1}^{\rank}\blam_i^2(\M) = c \n\d$ for some constant $c>0$ by Assumption \ref{assume1}(2).
Thus, the sum of all squared entries of $\M$ has an order $\n\d$.
This means that a non-vanishing proportion of entries of $\M$ contains non-vanishing signals with dimensionality (see \cite{fan2013}).
For more discussion, see Remark 2 in \cite{cho2015}.
\end{remark}
\begin{remark}
The singular vectors, $\{\Uhat_i\}_{i=1}^\rank$ and $\{\Vhat_i\}_{i=1}^\rank$, that compose $\Mhat$ 
	are consistent estimators of $\U$ and $\V$ up to  signs (for details, see \cite{cho2015}). 
Hence, when combining them with $\{\lamhat_i\}_{i=1}^\rank$ to reconstruct $\Mhat$, a sign problem happens. 
Assumption \ref{assume1}(4) assures that as $\n$ and $\d$ increase, 
	the probability of choosing different signs than the true signs, $\{\s_{0i}\}_{i=1}^\rank$, goes to zero. 
Given the asymptotic consistency of $\{\Uhat_i\}_{i=1}^\rank$, $\{\Vhat_i\}_{i=1}^\rank$, and $\{\lamhat_i\}_{i=1}^\rank$, 
	this is not an unreasonable assumption to make.
\end{remark}

\begin{proposition}(Theorem 4.4 in \cite{cho2015}) \label{MhatConsistent}
Under Assumption \ref{assume1} and the model setup in Section \ref{setup}, 
	$\Mhat$ is a consistent estimator of $\M$.  In particular,
\begin{equation*}
\frac{1}{\n\d}\smallnorm{ \Mhat - \M }_F^2 = o_p\(\frac{h_\n }{\p\d}\)\,,
\end{equation*}
where $h_\n$ diverges very slowly with the dimensionality, for example, $\log( \log \d)$.
\end{proposition}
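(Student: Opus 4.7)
My plan is to show consistency of each ingredient of $\Mhat$---the singular vectors $\Uhat_i, \Vhat_i$, the debiased singular values $\lamhat_i$, and the sign choices $\shat_i$---and then combine them through the algebraic decomposition
\begin{equation*}
\Mhat - \M \;=\; \sum_{i=1}^\rank \lam_i\bigl(\shat_i \Uhat_i\Vhat_i^T - \U_i \V_i^T\bigr) \;+\; \sum_{i=1}^\rank (\shat_i\lamhat_i - \lam_i)\,\Uhat_i\Vhat_i^T.
\end{equation*}
The starting point is that $\expect[\Mp^T \Mp]_{ij} = \p^2(\M^T \M)_{ij}$ for $i\neq j$ while $\expect[\Mp^T\Mp]_{ii} = \p\bigl[(\M^T\M)_{ii} + \n\sig^2\bigr]$, so after the diagonal correction in Algorithm~\ref{alg:initial} and after replacing $\phat$ by $\p$, one obtains $\expect[\Sig_{\phat}]\approx \p^2\bigl(\M^T \M + \n\sig^2 I\bigr)$. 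Thus the population target of $\Sig_{\phat}$ shares its top-$\rank$ eigenvectors with $\M^T\M$, while its eigenvalues are merely shifted by $\p^2 \n \sig^2$; the analogous statement holds for $\Sig_{t\phat}$ and $\M\M^T$.

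First I would prove a concentration bound of the form $\smallnorm{\Sig_{\phat} - \expect[\Sig_{\p}]}_{\mathrm{op}} = O_p(\p\sqrt{\n\d}\cdot h_\n^{1/2})$ by combining the Bernoulli concentration $\phat - \p = O_p(1/\sqrt{\n\d\p})$ with a matrix-Bernstein-type inequality for the sum of independent rank-one contributions $Z_k Z_k^T$ coming from the $\n$ rows of $\Mp$. Because Assumption~\ref{assume1}(2)--(3) forces the top-$\rank$ eigengaps of $\expect[\Sig_\p]$ to be of order $\p^2 \n\d$, a Davis-Kahan $\sin\Theta$ argument then yields $\smallnorm{\Vhat_i - s_i \V_i}_2 = O_p\bigl(\sqrt{h_\n/(\p^2\n\d)}\bigr)$ for some signs $s_i\in\{-1,+1\}$, and likewise for $\Uhat_i$ via $\Sig_{t\phat}$. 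Weyl's inequality gives $\blam_i(\Sig_{\phat}) = \p^2(\lam_i^2 + \n\sig^2) + O_p(\p\sqrt{\n\d}\cdot h_\n^{1/2})$; averaging over the tail (using $\mathrm{tr}(\Sig_{\phat})$ and subtracting the top-$\rank$ eigenvalues) gives $\alphatilde = \p^2 \n\sig^2 + o_p(\p^2\n\d)$. Substituting both into $\lamhat_i = \phat^{-1}\sqrt{\blam_i(\Sig_{\phat}) - \alphatilde}$ and expanding the square root to first order, using $\lam_i^2 \asymp \n\d$, produces $\lamhat_i - \lam_i = O_p(\sqrt{h_\n}/\p)$.

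Assumption~\ref{assume1}(4) immediately gives $\prob(\shat = \s_0)\to 1$, so with probability tending to one $\shat_i \Uhat_i\Vhat_i^T = \Uhat_i^\star(\Vhat_i^\star)^T$, where $\Uhat_i^\star := \text{sgn}(\langle\Uhat_i,\U_i\rangle)\,\Uhat_i$ and similarly $\Vhat_i^\star$. To finish the bound on $\smallnorm{\Mhat - \M}_F^2/(\n\d)$, I would apply the telescoping identity
\begin{equation*}
\lamhat_i \Uhat_i^\star (\Vhat_i^\star)^T - \lam_i\U_i\V_i^T = (\lamhat_i-\lam_i)\Uhat_i^\star(\Vhat_i^\star)^T + \lam_i(\Uhat_i^\star - \U_i)(\Vhat_i^\star)^T + \lam_i\U_i(\Vhat_i^\star - \V_i)^T,
\end{equation*}
together with $\lam_i\asymp\sqrt{\n\d}$ and $\smallnorm{\Uhat_i^\star(\Vhat_i^\star)^T}_F = 1$, so that each Frobenius term has squared norm at most of order $h_\n/\p^2$; after division by $\n\d$ this is $o_p(h_\n/(\p\d))$ because $\p\n\to\infty$ under Assumption~\ref{assume1}(1). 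The main obstacle will be the sub-Gaussian matrix concentration for $\Sig_{\phat}$: its entries are sums of $\n$ dependent products $y_{ki}y_{kj}(\M_{0,ki}+\eps_{ki})(\M_{0,kj}+\eps_{kj})$, so controlling their Orlicz norms precisely is needed both to propagate only the slow factor $h_\n$ (rather than a polylog in $\n\d$) through the eigenvector perturbation step, and to control $\alphatilde$ uniformly over the bulk of the spectrum despite the joint dependence of all $\d-\rank$ trailing eigenvalues.
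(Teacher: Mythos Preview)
The paper does not actually prove this proposition: it is quoted verbatim as Theorem~4.4 of \cite{cho2015}, so there is no in-paper proof to compare your attempt against. That said, your plan---perturbation bounds for $\Uhat_i,\Vhat_i$ via Davis--Kahan, for $\lamhat_i$ via Weyl plus a first-order expansion of the square root, Assumption~\ref{assume1}(4) to fix the signs, and then the rank-one telescoping decomposition---is exactly the template that the present paper reuses in its proof of Theorem~\ref{thm:mathInduct} (see \eqref{eta_t+1}, \eqref{utilde-u}, \eqref{lam-lam}), so the architecture of your argument is correct and matches the surrounding machinery.

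The one place where your sketch is likely too optimistic is the spectral-norm concentration $\smallnorm{\Sig_{\phat} - \expect[\Sig_{\p}]}_{\mathrm{op}} = O_p\bigl(\p\sqrt{\n\d}\, h_\n^{1/2}\bigr)$. Inspecting the analogous bounds in the proof of Theorem~\ref{thm:mathInduct}---e.g.\ \eqref{deltaExist}, where terms like $\smallnorm{(\y\cdot\eps)^T(\y\cdot\eps)-\n\p\sig^2 I}_2$ appear at order $\p\sqrt{\n\d^2}$ and $\smallnorm{\M^T(\y\cdot\eps)}_2$ at order $\sqrt{\p\d\n^2}$---the operator norm of the full centered matrix is larger than $\p\sqrt{\n\d}$ by polynomial factors in $\d$ or $\sqrt{\n/\p}$. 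What rescues the eigenvector rate is not a sharp operator-norm bound but the \emph{vector} form of Davis--Kahan used in \eqref{utilde-u}: one controls $\smallnorm{(\Sig_{\phat}-\expect[\Sig_p])\V_i}_F$ directly, exploiting that $\V_i$ is a fixed unit vector so the variance calculation collapses (cf.\ \eqref{epsyTepsy}--\eqref{yepsu-yepsv}). If you replace the operator-norm step by this projected bound, your remaining algebra (in particular the final rate $h_\n/(\p^2\n\d)=o\bigl(h_\n/(\p\d)\bigr)$ via $\p\n\to\infty$) goes through.
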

\begin{remark} \label{minimaxRate}
Since $h_\n$ in Proposition \ref{MhatConsistent} can be any quantity that diverges slowly with the dimensionality, the convergence rate of $\Mhat$ can be thought of as $1/\p\d$. Under the setting where the rank of $\M$ is fixed as in this paper, it is matched to the minimax error rate, $\rank/\p\d$, found in \cite{koltchinskii2011}. 
\end{remark}

Using $\Mhat$ to initialize \AIs has two major advantages. 
First, since $\Mhat$ is already a consistent estimator of $\M$ achieving the minimax error rate, 
	it allows a series of the iterates of \AIs coming after $\Mhat$ to be also consistent estimators of $\M$ achieving the minimax error rate (see Theorem \ref{thm:mathInduct}). 
Second, because \AIs is based on a non-convex optimization problem (see Section \ref{generalSI}), its convergence may depend on initial values. 
$\Mhat$ provides \AIs a suitable initializer. 

\subsection{Adaptive thresholds} \label{adapt-thresh}


To motivate the novel thresholding scheme of \AI, we first consider the case where a fully-observed noisy low-rank matrix is available.
Specifically, suppose that the probability of observing each entry, $\p$, is $1$ and thus $\Mf = \M+\eps$ is observed.
Under the model setup in Section \ref{setup} we can easily show that
\begin{equation} \label{expect-Mf}
\expect (\Mf^T \Mf) = \M^T \M + \n\sig^2 I_\d
\quad\text{and}\quad
\expect (\Mf \Mf^T) = \M \M^T + \d\sig^2 I_\n,
\end{equation}
where $I_{\d}$ and $I_{\n}$ are identity matrices of size $\d$ and $\n$, respectively.
This shows that the eigenvectors of $\expect (\Mf^T \Mf)$ and $\expect (\Mf \Mf^T)$ are the same as the right and left singular vectors of $\M$.
Also, the top $\rank$ eigenvalues of $\expect (\Mf^T \Mf)$ consist of the squared singular values of $\M$ and a noise, $\n\sig^2$, 
	the latter of which is the same as the average of the bottom $\d-\rank$ eigenvalues of $\expect (\Mf^T \Mf)$.
In light of this, we want the estimator of $\M$ based on $\Mf$ to keep the first $\rank$ singular vectors of $\Mf$ as they are, but adjust the bias occuring in the singular values of $\Mf$.
Thus, the resulting estimator is
\begin{equation} \label{estor-M-fully}
\Mhat^F = \sum_{i=1}^\rank \sqrt{\blam_i^2 (\Mf)-\balp} \; \bu_i (\Mf) \bv_i (\Mf)^T,
	\quad\text {where }
	\balp = \frac{1}{\d-\rank}\sum_{i=\rank+1}^\d \blam_i^2 (\Mf).
\end{equation}
A simple extension of Proposition \ref{MhatConsistent} shows that 
	$\Mhat^F$ achieves the best possible minimax error rate of convergence, $1/\d$, since $\p=1$.

Now consider the cases where a partially-observed noisy low-rank matrix $\Mp$ is available.
For each iteration $t\ge 1$, we fill out the unobserved entries of $\Mp$ with the corresponding entries of the previous iterate $Z_{t}$,
		treat the completed matrix $\Mtilde_{t}=\P(\Mp)+\Portho(Z_{t})$ as if it is a fully-observed matrix $\Mf$,
		and find the next iterate $Z_{t+1}$ in the same way that we found $\Mhat^F$ from $\Mf$ in \eqref{estor-M-fully};
\begin{equation} \label{estor-Mhat-t}
Z_{t+1} = \sum_{i=1}^\rank \sqrt{\blam_i^2 (\Mtilde_{t})-\alphatilde_{t}} \; \bu_i (\Mtilde_{t}) \bv_i (\Mtilde_{t})^T,
	\quad\text {where }
	\alphatilde_{t} = \frac{1}{\d-\rank}\sum_{i=\rank+1}^\d \blam_i^2 (\Mtilde_{t}).
\end{equation}
Note that the difference in \eqref{estor-Mhat-t} from \eqref{estor-M-fully} is in the usage of $\Mtilde_{t}$ instead of $\Mf$. 
Hence, the performance of \AIs may depend on how close $\P(Z_t)$ is to $\P(\M)$.
Algorithm \ref{alg:AI} summarizes these computing steps of \AIs continued from Algorithm \ref{alg:initial}.

\begin{algorithm}[ht]
\caption{\;\AI}
\begin{algorithmic}
    \Require{$\Mp$, y, $r$, and $\varepsilon > 0$}
    \State $Z_1 \gets \Mhat$          \Comment{from Algorithm \ref{alg:initial}}
    \Repeat \;\; for $t=1,2,\ldots$
    \State 
        $\Mtilde_{t} \gets \P(\Mp)+\Portho(Z_{t})$
    \State
        $V_i^{(t)} \gets \bv_i (\Mtilde_t), \hspace*{0.5cm} \forall i \in \{1,\ldots,\rank\}$
    \State
        $U_i^{(t)} \gets \bu_i (\Mtilde_t), \hspace*{0.5cm} \forall i \in \{1,\ldots,\rank\}$
    \State
        $\alphatilde_{t} \gets \frac{1}{\d-\rank} \sum_{i=\rank+1}^{\d} \blam_{i}^2 (\Mtilde_{t})$
    \State
        $\tau_{t,i} \gets \blam_i (\Mtilde_t) - \sqrt{\blam_i^2 (\Mtilde_t) - \alphatilde_{t}}, \hspace*{1.9cm} \forall i \in \{1,\ldots,\rank\}$ \Comment{Adaptive thresholds}
    \State
        $\lam_i^{(t)} \gets \blam_i (\Mtilde_t) - \tau_{t,i} \(= \sqrt{\blam_i^2 (\Mtilde_t) - \alphatilde_{t}}\), \quad \forall i \in \{1,\ldots,\rank\}$
    \State
        $Z_{t+1} \gets \sum_{i=1}^\rank \lam_i^{(t)} U_i^{(t)} V_i^{(t)T}$
    \State
        $t \gets t + 1$
    \Until{$\norm{Z_{t+1}-Z_{t}}_F^2/\norm{Z_{t}}_F^2 \le \varepsilon$}
    \\
    \Return{$Z_{t+1}$}
\end{algorithmic}
\label{alg:AI}
\end{algorithm}

The following theorem illustrates that the iterates of \AIs retain the statistical performance of the initializer $\hat M$.
\begin{assumption} \label{assume3}
For all $i =1, \ldots,\rank, \;\text{sign}(\langle \bu_i(\Mtilde_t),\U_i \rangle)=\text{sign}(\langle \bv_i(\Mtilde_t),\V_i \rangle).$ 
\end{assumption}

\begin{theorem} \label{thm:mathInduct}
Under Assumptions \ref{assume1}-\ref{assume3} and the model setup in Section \ref{setup}, we have for any fixed value of $t$, 
$$\frac{1}{\n\d}\norm{Z_t -\M}_F^2 = o_p\(\frac{h_\n}{\p\d}\),  \ \mbox{ as $n,d \rightarrow \infty$ with any }  h_\n \to \infty$$
where $h_\n$ diverges very slowly with the dimensionality, for example, $\log( \log \d)$.
\end{theorem}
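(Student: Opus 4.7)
My plan is to prove Theorem~\ref{thm:mathInduct} by induction on $t$. The base case $t=1$ is immediate: since $Z_1 = \Mhat$, Proposition~\ref{MhatConsistent} gives exactly $\frac{1}{\n\d}\norm{Z_1 - \M}_F^2 = o_p(h_\n/(\p\d))$. For the inductive step, I assume the bound for $Z_t$ and transfer it to $Z_{t+1}$ produced by one pass of Algorithm~\ref{alg:AI}.

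The central decomposition is $\Mtilde_t = \M + E_t$ with $E_t := \P(\eps) + \Portho(Z_t - \M)$, isolating the observed noise from the imputation error on the unobserved entries. For $E_t$ I would establish $\|\P(\eps)\|_{op} = O_p(\sqrt{\n\p})$ via standard sub-Gaussian random matrix bounds, $\|\P(\eps)\|_F^2 = \p\n\d\sig^2(1+o_p(1))$ via concentration, and $\|\Portho(Z_t - \M)\|_{op} \le \|\Portho(Z_t - \M)\|_F \le \|Z_t - \M\|_F = o_p(\sqrt{\n h_\n/\p})$ by the inductive hypothesis; Assumption~\ref{assume1}(1) makes both contributions to $\|E_t\|_{op}$ of smaller order than $\lam_\rank = \Theta(\sqrt{\n\d})$. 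Weyl's inequality then yields $\blam_i(\Mtilde_t) = \lam_i(1 + o_p(1))$ for $i \le \rank$, and Wedin's $\sin\Theta$ theorem together with the strict separation in Assumption~\ref{assume1}(2)--(3) and the sign alignment in Assumption~\ref{assume3} gives $\|\bu_i(\Mtilde_t) - \U_i\|_2 + \|\bv_i(\Mtilde_t) - \V_i\|_2 = O_p(\|E_t\|_{op}/\sqrt{\n\d})$.

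The crucial step is exploiting the cancellation engineered by the adaptive threshold $\alphatilde_t$. The structural observation is that $E_t^T E_t$ concentrates around $\p\n\sig^2 I_\d$, so $\Mtilde_t^T \Mtilde_t \approx \M^T \M + \p\n\sig^2 I_\d + (\text{mean-zero perturbation})$; both the top-$\rank$ squared singular values and the average of the bottom $\d-\rank$ are inflated by the same amount $\p\n\sig^2$. I would combine the Eckart--Young inequality $\sum_{i > \rank} \blam_i^2(\Mtilde_t) \le \|\Mtilde_t - \M\|_F^2 = \|E_t\|_F^2$ with a matching lower bound obtained from Weyl applied to the shifted matrix $\M^T\M + \p\n\sig^2 I_\d$ to conclude $\alphatilde_t = \p\n\sig^2(1+o_p(1))$ at a sharp rate, and a parallel refined expansion on $\blam_i^2(\Mtilde_t)$ to obtain $\blam_i^2(\Mtilde_t) - \alphatilde_t = \lam_i^2 + o_p(\lam_i^2 \sqrt{h_\n/(\p\d)})$, so that $\sqrt{\blam_i^2(\Mtilde_t) - \alphatilde_t}$ estimates $\lam_i$ with the precision required by the target rate.

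Substituting these pieces into $Z_{t+1} = \sum_{i=1}^\rank \sqrt{\blam_i^2(\Mtilde_t) - \alphatilde_t}\,\bu_i(\Mtilde_t)\bv_i(\Mtilde_t)^T$, expanding termwise (separating the singular-value error, the singular-vector perturbations, and their product), and applying the triangle inequality in Frobenius norm closes the induction with $\frac{1}{\n\d}\|Z_{t+1}-\M\|_F^2 = o_p(h_\n/(\p\d))$. The main technical obstacle is twofold. First, $Z_t$ is a complicated function of $(\y,\eps)$ through the earlier iterations, so $E_t$ is not an independent-entry matrix; I handle this by dominating the $\Portho(Z_t-\M)$ piece deterministically via the inductive hypothesis and reserving concentration arguments for $\P(\eps)$, which depends on the data only through $\y$ and $\eps$. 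Second, naive Weyl is too crude to capture the cancellation between $\blam_i^2(\Mtilde_t)$ and $\alphatilde_t$ (a direct application leaves an $O_p(\n\sqrt{\p\d})$ residual that dominates the target $\n h_\n/\p$), so I would instead analyze the shifted matrix $\Mtilde_t^T \Mtilde_t - \p\n\sig^2 I_\d$ and apply a matrix Bernstein-type inequality to control the resulting centered quadratic perturbation, thereby recovering the sharp rate after cancellation.
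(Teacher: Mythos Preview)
Your proposal is correct and matches the paper's proof in structure: induction on $t$ with base case from Proposition~\ref{MhatConsistent}, the decomposition $\Mtilde_t = \M + \P(\eps) + \Portho(Z_t - \M)$, separate control of singular-vector and singular-value errors, and the key cancellation of the common $\n\p\sig^2$ bias between $\blam_i^2(\Mtilde_t)$ and $\alphatilde_t$. The only differences are in the perturbation tools chosen. Where you invoke Wedin on $\Mtilde_t$, the paper applies Davis--Kahan to $\Mtilde_t^T\Mtilde_t$ versus $\M^T\M+\n\p\sig^2 I$. Where you bound $|\blam_i^2(\Mtilde_t)-\lam_i^2-\n\p\sig^2|$ via Weyl on the shifted Gram matrix plus matrix concentration, the paper instead controls the quadratic form $V_i^T\big[\Mtilde_t^T\Mtilde_t-\M^T\M-\n\p\sig^2 I\big]V_i$ by expanding into the eight cross terms of $\Mtilde_t^T\Mtilde_t$ and bounding each by a direct second-moment calculation (e.g.\ $\expect\|(\y\cdot\eps)^T(\y\cdot\eps)V_i-\n\p\sig^2 V_i\|_F^2=O(\p^2\n\d)$), then corrects for $\bv_i(\Mtilde_t)\neq V_i$ using the vector bound already in hand. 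For $\alphatilde_t$ the paper does not use Eckart--Young but rather the identity $\sum_{j>\rank}\bv_j\bv_j^T=I-\sum_{j\le\rank}\bv_j\bv_j^T$ to reduce the tail average to the top-$\rank$ vector perturbation. Your operator-norm/concentration route is slightly more modern, the paper's entrywise-moment route more elementary; both deliver the same $o_p(\sqrt{h_\n/(\p\d)})$ rate on every piece and close the induction identically.
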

\begin{remark}
Similarly as in Remark \ref{minimaxRate}, since $h_\n$ is a quantity diverging very slowly, 
	the convergence rate of $Z_t$ can be thought of as $1/\p\d$ which is matched to the minimax error rate, $\rank/\p\d$ (\cite{koltchinskii2011}). 
\end{remark}

\subsection{Non-convexity of \AI}
We can view \AIs as an estimation method via non-convex optimization. 

For $t\ge1$, define
\begin{equation} \label{taudef}
\tau_{t,i}= 
\left\{\begin{array}{ll}
\blam_i (\Mtilde_t) - \sqrt{\blam_i^2 (\Mtilde_t) - \alphatilde_{t}}, & i \le \rank \\
\blam_{\rank+1} (\Mtilde_t),  & i >\rank\end{array}\right.,
\end{equation}
where $\alphatilde_{t} = \frac{1}{\d-\rank} \sum_{i=\rank+1}^{\d} \blam_{i}^2 (\Mtilde_{t})$ and $\Mtilde_t = \P\(\Mp\) + \Portho\(Z_t\)$.
Then, in each iteration \AIs provides a solution to the problem 
\begin{equation} \label{eq:Qt}
\min_{Z\in\real^{\n\times\d}} \; \frac{1}{2\n\d}\,\smallnorm{\Mtilde_t - Z}_F^2 
	+ \sum_{i=1}^\d \frac{\tau_{t,i}}{\sqrt{\n\d}} \frac{\blam_i (Z)}{\sqrt{\n\d}}.
\end{equation}
Note that the threshold parameters, $\tau_{t,i}$, have dependence on both the $i$-th singular value and the $t$-th iteration. 
The following theorem provides an explicit solution to \eqref{eq:Qt}.

\begin{theorem} \label{solutionZ}
Let $X$ be an $\n\times\d$ matrix and let $\n\ge\d$. The optimization problem 
\begin{eqnarray} \label{ProblemZ}
\min_Z \;\frac{1}{2\n\d}\norm{X-Z}_F^2 + \sum_{i=1}^\d \frac{\tau_i}{\sqrt{\n\d}} \frac{\blam_i(Z)}{\sqrt{\n\d}}
\end{eqnarray} 
has a solution which is given by 
\begin{eqnarray} \label{Zhat}
\hat Z = \Phi (\Delta-\btau)_+\Psi ^T,
\end{eqnarray}
where $\Phi\Delta\Psi^T$ is the SVD of $X$, $\btau = \diag(\tau_1,\ldots,\tau_\d)\in\real^{\d\times\d}$, $(\Delta-\btau)_+ = \diag\big( (\blam_1(X)-\tau_1)_+,\ldots, (\blam_\d(X)-\tau_\d)_+\big)\in\real^{\d\times\d}$, and $c_+ = \max(c,0)$ for any $c\in\real$.
\end{theorem}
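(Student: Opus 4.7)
The plan is to recognize the objective as a generalized version of the classical matrix soft-thresholding problem (Cai--Cand\`es--Shen, 2010), where the single nuclear-norm penalty is replaced by a weighted sum of singular values. Accordingly I would follow the standard Cai--Cand\`es--Shen proof template and decouple the optimization into independent scalar problems through the von Neumann trace inequality.

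First, I would simplify. Multiplying the objective in \eqref{ProblemZ} by $\n\d$ does not change its argmin, so the problem reduces to minimizing $\tfrac{1}{2}\|X-Z\|_F^2 + \sum_{i=1}^\d \tau_i\,\blam_i(Z)$. Expanding the Frobenius term gives
\begin{equation*}
\tfrac{1}{2}\|X-Z\|_F^2 + \sum_{i=1}^\d \tau_i\,\blam_i(Z) \;=\; \tfrac{1}{2}\|X\|_F^2 - \tr(X^T Z) + \tfrac{1}{2}\|Z\|_F^2 + \sum_{i=1}^\d \tau_i\,\blam_i(Z),
\end{equation*}
and the first term is a constant that can be discarded.

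Next, I would invoke the von Neumann trace inequality, which says $\tr(X^T Z)\le \sum_{i=1}^\d \blam_i(X)\,\blam_i(Z)$, with equality attained precisely when $X$ and $Z$ admit a joint SVD, i.e.\ $X=\Phi\Delta\Psi^T$ and $Z=\Phi\,\diag(\sigma_1,\ldots,\sigma_\d)\,\Psi^T$ with $\sigma_1\ge\cdots\ge\sigma_\d\ge 0$. Combined with $\|Z\|_F^2=\sum_i\sigma_i^2$, this yields the uniform lower bound
\begin{equation*}
\text{obj}(Z) - \tfrac{1}{2}\|X\|_F^2 \;\ge\; \sum_{i=1}^\d \Bigl[\tfrac{1}{2}\sigma_i^2 - \blam_i(X)\,\sigma_i + \tau_i\,\sigma_i \Bigr]\;=\;\sum_{i=1}^\d \Bigl[\tfrac{1}{2}\bigl(\sigma_i - (\blam_i(X)-\tau_i)\bigr)^2 + \text{const}_i\Bigr].
\end{equation*}
For each $i$ separately, the constrained minimizer over $\sigma_i\ge 0$ is $\sigma_i^\star = (\blam_i(X)-\tau_i)_+$.

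Finally, I would argue attainment and feasibility. The candidate $\hat Z = \Phi(\Delta-\btau)_+\Psi^T$ shares SVD vectors with $X$, so it achieves equality in the von Neumann step, and its $i$-th diagonal entry equals the coordinatewise minimizer $\sigma_i^\star$; hence the lower bound is attained and $\hat Z$ is a minimizer. The one step that is not purely mechanical---the \emph{main obstacle}---is checking that the sequence $\sigma_1^\star,\ldots,\sigma_\d^\star$ is nonincreasing, which is required for $\hat Z$ to be admissible in the joint-SVD form (so that the $\sigma_i^\star$ really are the ordered singular values of $\hat Z$). In the \AIs setting this is automatic: for $i\le\rank$ one has $\tau_{t,i}=\blam_i(\Mtilde_t)-\sqrt{\blam_i^2(\Mtilde_t)-\alphatilde_t}$, so $\sigma_i^\star=\sqrt{\blam_i^2(\Mtilde_t)-\alphatilde_t}$, which is decreasing in $i$ because $\blam_i(\Mtilde_t)$ is; and for $i>\rank$ the threshold $\tau_{t,i}=\blam_{\rank+1}(\Mtilde_t)\ge\blam_i(\Mtilde_t)$ forces $\sigma_i^\star=0$. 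Thus the ordering $\sigma_1^\star\ge\cdots\ge\sigma_\d^\star\ge 0$ holds, the lower bound is tight, and $\hat Z$ solves \eqref{ProblemZ}.
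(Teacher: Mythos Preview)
Your argument is essentially the same as the paper's: both reduce the matrix problem to independent scalar quadratics by showing that an optimal $Z$ shares singular vectors with $X$. The paper phrases this via the sequential variational characterization of singular vectors (maximizing $\tilde u_i^T X\tilde v_i$ subject to orthogonality against the previously chosen vectors), whereas you invoke von Neumann's trace inequality directly; these are two faces of the same fact, and the remaining scalar minimization is identical.

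Where you go further than the paper is in explicitly flagging and resolving the ordering constraint $\sigma_1^\star\ge\cdots\ge\sigma_\d^\star$. The paper's proof imposes $\tilde\lam_1\ge\cdots\ge\tilde\lam_\d\ge 0$ as a constraint but never checks that the coordinatewise minimizers $(\blam_i(X)-\tau_i)_+$ actually respect it; for arbitrary $\tau_i$ they need not, and then the claimed $\hat Z$ would not have $\blam_i(\hat Z)=(\blam_i(X)-\tau_i)_+$. Your verification that the specific thresholds $\tau_{t,i}$ used in \AIs preserve the ordering is precisely the missing piece, and it is good that you isolated it as the main obstacle rather than treating the result as a black-box extension of the nuclear-norm case.
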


\begin{remark} \label{defineTau}
To see how Theorem \ref{solutionZ} provides a solution to \eqref{eq:Qt}, 
	let $X = \Mtilde_t$ and $\tau_i=\tau_{t,i}$ as specified in \eqref{taudef}.
Then, \eqref{eq:Qt} and \eqref{ProblemZ} become the same and $\hat Z$ in \eqref{Zhat} gives the explicit form of the $(t+1)$-th iterate, $Z_{t+1}$, in Algorithm \ref{alg:AI}. 
\end{remark}

If all of the thresholding parameters in \eqref{eq:Qt} are equal 
	such that $\tau=\tau_{t,1}=\ldots=\tau_{t,\d}$ for all $1 \le i \le d$ and $t\ge 1$, 
	the optimization problem \eqref{eq:Qt} becomes equivalent to that of \SIs (\cite{mazumder2010}) and 
	Theorem \ref{solutionZ} provides an iterative solution to it. 
While \SIs requires finding the right value of a thresholding parameter $\tau$ by using a cross validation (CV) technique 
	which is time-consuming and often does not have a straightforward validation criteria, 
	\AIs suggests specific values of the thresholding levels as in \eqref{taudef}.
The novel thresholding scheme of \AIs together with the rank constraint 
	results in superior empirical performances over the existing \SI-type algorithms (see Section \ref{sim}). 

The thresholding scheme of \AIs can be viewed as a solution to a non-convex optimization problem
	since at every iteration it differentially and adaptively thresholds the singular values.
As Hastie and others alluded to a similar issue for matrix completion methods via non-convex optimization in \cite{hastie2014}, 
	it is hard to provide a direct convergence guarantee of \AI. 
So, in the following section we introduce a generalized-\SIs algorithm, simpler than \AIs and yet still non-convex, 
	and investigate its asymptotic convergence. 
It hints at the convergent behavior of \AIs in the asymptotic sense. 

\section{Generalized \SI} \label{generalSI}

Generalized-\SIs is an algorithm which iteratively solves the problem,
\begin{equation} \label{eq:Q}
\min_{Z\in\real^{\n\times\d}} \; Q_{\tau}(Z|Z_{t}^g) := \frac{1}{2\n\d}\norm{\P\(\Mp\) + \Portho\(Z_t^g\) - Z}_F^2 
	+ \sum_{i=1}^\d \frac{\tau_{i}}{\sqrt{\n\d}} \frac{\blam_i (Z)}{\sqrt{\n\d}}, 
\end{equation}
to ultimately solve the optimization problem,
\begin{equation} \label{eq:f}
 \min_{Z\in\real^{\n\times\d}} \; f_{\tau}(Z) := \frac{1}{2\n\d}\norm{\P(\Mp) - \P(Z)}_F^2 
	+ \sum_{i=1}^\d \frac{\tau_{i}}{\sqrt{\n\d}} \frac{\blam_i (Z)}{\sqrt{\n\d}}.
\end{equation}
Note that generalized-\SIs differentially penalizes the singular values, but the thresholding parameters do not change over iterations.
The iterative solutions of generalized-\SIs are denoted by $Z^{g}_{t+1}:= \argmin_{Z\in\real^{\n\times\d}} Q_{\tau}(Z|Z_{t}^g)$ for $t\ge 1$ and 
	Theorem \ref{solutionZ} provides a closed form of $Z^{g}_{t+1}$.
If $\tau_i = \tau$ for all $1\le i\le \d$, generalized-\SIs will be equivalent to \SIs and both \eqref{eq:Q} and \eqref{eq:f} become convex problems.
However, by differentially penalizing the singular values, generalized-\SIs ends up solving a non-convex optimization problem.
Theorem \ref{thm:Zsolvesf} below shows that despite the non-convexity of generalized-\SI,
	the iterates of generalized-\SI, $\{Z^g_t\}_{t\ge 1}$, converge to a solution of problem \eqref{eq:f} under certain conditions. 

\begin{assumption} \label{assume2}
Let $\Mtilde_{t}^g = \P(\Mp) + \Portho(Z_t^g)$ and $D_t^g := \Mtilde_{t}^g - Z^g_{t+1}$. Then,
$$\frac{1}{nd}\norm{D_t^g - D_{t+1}^g}_F^2 +\frac{2}{nd}\langle D_t^g - D_{t+1}^g,Z^g_{t+1} - Z^g_{t+2} \rangle \ge 0 \quad\text{for all }\; t\ge1.$$
\end{assumption}

\begin{theorem} \label{thm:Zsolvesf}
Let $Z_\infty$ be a limit point of the sequence $Z_t^g$.
Under Assumption \ref{assume2}, if the minimizer $Z^s$  of \eqref{eq:f} satisfies
\begin{eqnarray}\label{assume4}
Z^s \in  \bigg \{ Z \in \mathbb{R}^{n \times d} : \sum_{i=1}^d \tau_i \lam_i (Z) \geq  \sum_{i=1}^d \tau_i \lam_i (Z_\infty) +  \langle (Z-Z_\infty), D_\infty  \rangle \bigg  \},
\end{eqnarray}
we have $f_{\tau}(Z_{\infty}) = f_{\tau}(Z ^s)$ and $\lim_{t\rightarrow \infty} f_{\tau}(Z^g_{t}) = f(Z^s)$.
\end{theorem}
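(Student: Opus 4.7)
The approach is to recognize generalized-\SIs as a majorization-minimization (MM) iteration with respect to $f_{\tau}$, exploit the descent this provides together with Assumption~\ref{assume2} to force iterate differences to zero, and then invoke condition \eqref{assume4} to promote a stationary-type limit into a global minimizer. First, a direct expansion (splitting $Z = \P(Z) + \Portho(Z)$ inside the squared Frobenius term) gives the majorization identity
\begin{equation*}
Q_{\tau}(Z|Z_t^g) \;=\; f_{\tau}(Z) + \frac{1}{2\n\d}\smallnorm{\Portho(Z_t^g - Z)}_F^2,
\end{equation*}
so $Q_{\tau}(\cdot|Z_t^g) \ge f_{\tau}(\cdot)$ with equality at $Z = Z_t^g$. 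Because $Z^g_{t+1}$ minimizes $Q_{\tau}(\cdot|Z_t^g)$, this yields $f_{\tau}(Z^g_{t+1}) + \frac{1}{2\n\d}\smallnorm{\Portho(Z_t^g - Z^g_{t+1})}_F^2 \le f_{\tau}(Z_t^g)$. Hence $\{f_{\tau}(Z_t^g)\}$ is non-increasing and bounded below, so it converges, and telescoping gives $\sum_t \smallnorm{\Portho(Z_t^g - Z^g_{t+1})}_F^2 < \infty$, in particular $\smallnorm{\Portho(Z_t^g - Z^g_{t+1})}_F \to 0$.

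The next step is to upgrade vanishing $\Portho$-gaps to vanishing full-iterate gaps, which is precisely where Assumption~\ref{assume2} enters. Using $D_t^g = \Mtilde_t^g - Z^g_{t+1}$ together with $\Mtilde_t^g - \Mtilde_{t+1}^g = \Portho(Z_t^g - Z^g_{t+1})$, a short algebraic rearrangement rewrites Assumption~\ref{assume2} as $\smallnorm{Z^g_{t+1} - Z^g_{t+2}}_F^2 \le \smallnorm{\Portho(Z_t^g - Z^g_{t+1})}_F^2$, from which $\smallnorm{Z_t^g - Z^g_{t+1}}_F \to 0$. Consequently, along any subsequence $\{t_k\}$ with $Z^g_{t_k} \to Z_\infty$ we also have $Z^g_{t_k+1} \to Z_\infty$, so $\Mtilde^g_{t_k} \to \Mtilde_\infty := \P(\Mp) + \Portho(Z_\infty)$ and $D^g_{t_k} \to D_\infty = \Mtilde_\infty - Z_\infty = \P(\Mp - Z_\infty)$, a matrix supported on $\Omega$.

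Now I would combine the variational inequality coming from the update with condition \eqref{assume4}. Since $Z^g_{t+1}$ minimizes $Q_{\tau}(\cdot|Z_t^g)$, expanding the quadratic term gives, for every $Z \in \real^{\n\times\d}$,
\begin{equation*}
\sum_{i=1}^{\d} \tau_i \blam_i(Z) \;\ge\; \sum_{i=1}^{\d} \tau_i \blam_i(Z^g_{t+1}) + \langle D_t^g,\, Z - Z^g_{t+1}\rangle - \tfrac{1}{2}\smallnorm{Z - Z^g_{t+1}}_F^2.
\end{equation*}
Hypothesis \eqref{assume4} asserts that, at $Z = Z^s$, the corresponding limiting inequality holds without the quadratic correction: $\sum_i \tau_i \blam_i(Z^s) - \sum_i \tau_i \blam_i(Z_\infty) \ge \langle D_\infty, Z^s - Z_\infty\rangle$. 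Substituting $D_\infty = \P(\Mp - Z_\infty)$, the inner product collapses to $\langle \P(Z^s) - \P(Z_\infty),\, \P(\Mp) - \P(Z_\infty)\rangle$, and the polarization identity
\begin{equation*}
\tfrac{1}{2}\smallnorm{a-c}_F^2 - \tfrac{1}{2}\smallnorm{a-b}_F^2 + \langle c-b,\, a-b\rangle = \tfrac{1}{2}\smallnorm{c-b}_F^2
\end{equation*}
applied with $a = \P(\Mp)$, $b = \P(Z_\infty)$, $c = \P(Z^s)$ reduces $f_{\tau}(Z^s) - f_{\tau}(Z_\infty)$ to $\frac{1}{2\n\d}\smallnorm{\P(Z^s) - \P(Z_\infty)}_F^2 \ge 0$. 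Hence $f_{\tau}(Z^s) \ge f_{\tau}(Z_\infty)$, and since $Z^s$ is a minimizer we conclude $f_{\tau}(Z_\infty) = f_{\tau}(Z^s)$. Combined with the monotone convergence of $\{f_{\tau}(Z_t^g)\}$ and continuity of $f_{\tau}$ on the convergent subsequence, this also delivers $\lim_{t\to\infty} f_{\tau}(Z_t^g) = f_{\tau}(Z^s)$.

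The main obstacle, as is typical for non-convex MM proofs, is the non-convexity of the differentially weighted ``nuclear norm'' $\sum_i \tau_i \blam_i(\cdot)$: the variational inequality obtained from the update carries a $-\tfrac{1}{2}\smallnorm{Z - Z^g_{t+1}}_F^2$ term that cannot be absorbed by a standard convex subdifferential at $Z_\infty$. Condition \eqref{assume4} is the minimal hypothesis that bridges this gap for the specific pair $(Z_\infty, Z^s)$. Once it is available, the remaining work is essentially the polarization computation above, and the only place demanding care is verifying that $D_\infty$ is supported on $\Omega$ so that all inner products cleanly collapse onto their $\P$-projected counterparts.
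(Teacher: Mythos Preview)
Your proof is correct and follows essentially the same route as the paper: the MM/majorization identity for $Q_\tau$ over $f_\tau$, the use of Assumption~\ref{assume2} (which, as you observe, is algebraically equivalent to $\smallnorm{Z^g_{t+1}-Z^g_{t+2}}_F^2 \le \smallnorm{\Portho(Z^g_t-Z^g_{t+1})}_F^2$) to force $\smallnorm{Z^g_t-Z^g_{t+1}}_F\to 0$, the identification $D_\infty=\P(\Mp-Z_\infty)$, and then condition~\eqref{assume4} combined with the polarization identity to obtain $f_\tau(Z^s)\ge f_\tau(Z_\infty)$. Your ordering---telescoping the MM descent to get $\Portho$-gaps to zero first, then invoking Assumption~\ref{assume2}---is slightly more streamlined than the paper's Lemma, but the ingredients and logic are the same.
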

\begin{remark} \label{connectTOsoftimpute}
If $\tau_i = \tau$ for all $i$ as in case of \SI, Assumption \ref{assume2} and \eqref{assume4} are always satisfied because
$\frac{1}{\tau}D_t^g$ belongs to the sub-gradient of $\norm{Z_{t+1}^g}_\ast$ .
\end{remark}
\begin{remark}
If $Z^s$ is unique, then 
	generalized-\SIs finds the global minimum point of \eqref{eq:f} by Theorem \ref{thm:Zsolvesf}. 
\end{remark}

Generalized-\SIs resembles \AIs in a sense that both of them employ different thresholding parameters on $\blam_i (Z)$'s.  
However, \AIs updates these tuning parameters every iteration while generalized-\SIs does not. 
The following lemmas show that despite this difference, 
	the convergent behavior of \AIs is asymptotically close to that of generalized-\SI. 

\begin{lemma} \label{lem:undateTuning}
Under Assumptions \ref{assume1}-\ref{assume3} and the model setup in Section \ref{setup}, we have
\begin{eqnarray*}
\left| \frac{\tau_{t,i}}{\sqrt{\n\d}} - \frac{\tau_{t+1,i}}{\sqrt{\n\d}} \right|
	= o_p\(\sqrt{\frac{h_\n}{\p\d}}\) \quad\text{for }\; i=1,\ldots,\d,
\end{eqnarray*}
where $\tau_{t,i}$ is defined in \eqref{taudef}.
\end{lemma}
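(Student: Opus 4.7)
The plan is to reduce the task to bounding $\norm{\Mtilde_{t+1}-\Mtilde_t}_F$ and then apply standard singular-value perturbation inequalities. Since $\Mtilde_{t+1}-\Mtilde_t=\Portho(Z_{t+1}-Z_t)$, the triangle inequality together with Theorem \ref{thm:mathInduct} applied at $t$ and at $t+1$ (both fixed) gives
\begin{equation*}
\frac{1}{\sqrt{\n\d}}\,\norm{\Mtilde_{t+1}-\Mtilde_t}_F \;\le\; \frac{1}{\sqrt{\n\d}}\,\norm{Z_{t+1}-\M}_F + \frac{1}{\sqrt{\n\d}}\,\norm{Z_t-\M}_F \;=\; o_p\bigl(\sqrt{h_\n/(\p\d)}\bigr).
\end{equation*}
Weyl's inequality then yields $|\blam_i(\Mtilde_t)-\blam_i(\Mtilde_{t+1})|\le\norm{\Mtilde_t-\Mtilde_{t+1}}_F=o_p(\sqrt{\n h_\n/\p})$ for every $i$, which already settles the case $i>\rank$ since there $\tau_{t,i}/\sqrt{\n\d}=\blam_{\rank+1}(\Mtilde_t)/\sqrt{\n\d}$.

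For $i\le\rank$, I would apply the identity $\sqrt A-\sqrt B=(A-B)/(\sqrt A+\sqrt B)$ to write
\begin{equation*}
\tau_{t,i}-\tau_{t+1,i} \;=\; \bigl(\blam_i(\Mtilde_t)-\blam_i(\Mtilde_{t+1})\bigr)\;-\;\frac{\bigl(\blam_i^2(\Mtilde_t)-\blam_i^2(\Mtilde_{t+1})\bigr)-\bigl(\alphatilde_t-\alphatilde_{t+1}\bigr)}{\sqrt{\blam_i^2(\Mtilde_t)-\alphatilde_t}+\sqrt{\blam_i^2(\Mtilde_{t+1})-\alphatilde_{t+1}}}
\end{equation*}
and bound the two pieces of the numerator separately. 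The $\blam_i^2$ difference factors as $(\blam_i(\Mtilde_t)-\blam_i(\Mtilde_{t+1}))(\blam_i(\Mtilde_t)+\blam_i(\Mtilde_{t+1}))$, which is $o_p(\sqrt{\n h_\n/\p})\cdot O_p(\sqrt{\n\d})=o_p(\n\sqrt{\d h_\n/\p})$. For $|\alphatilde_t-\alphatilde_{t+1}|$ I would factor each summand of the defining average the same way, apply Cauchy--Schwarz, and invoke Mirsky's (Hoffman--Wielandt) inequality to get $\sum_{i>\rank}(\blam_i(\Mtilde_t)-\blam_i(\Mtilde_{t+1}))^2\le\norm{\Mtilde_t-\Mtilde_{t+1}}_F^2$, together with $\sum_{i>\rank}(\blam_i(\Mtilde_t)+\blam_i(\Mtilde_{t+1}))^2\le 4(\norm{\Mtilde_t}_F^2+\norm{\Mtilde_{t+1}}_F^2)=O_p(\n\d)$; this yields $|\alphatilde_t-\alphatilde_{t+1}|=o_p(\n\sqrt{h_\n/(\p\d)})$, of strictly smaller order than the $\blam_i^2$ term and hence absorbed.

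The main obstacle is the lower bound on the denominator: I need $\blam_i(\Mtilde_t)\asymp\sqrt{\n\d}$ for $i\le\rank$ together with $\alphatilde_t=o_p(\n\d)$, so that $\sqrt{\blam_i^2(\Mtilde_t)-\alphatilde_t}\asymp\sqrt{\n\d}$. Both reduce to an operator-norm bound on $\Mtilde_t-\M=\P(\eps)+\Portho(Z_t-\M)$. The signal-perturbation summand satisfies $\norm{\Portho(Z_t-\M)}_{\mathrm{op}}\le\norm{Z_t-\M}_F=o_p(\sqrt{\n h_\n/\p})$ by Theorem \ref{thm:mathInduct}, while the noise summand is controlled by a standard operator-norm bound for a sub-Gaussian random matrix with entries of variance at most $\p\sig^2$, giving $\norm{\P(\eps)}_{\mathrm{op}}=O_p(\sqrt{\p\n})$; Assumption \ref{assume1}(1) makes both of these $o(\sqrt{\n\d})$. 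Weyl's inequality combined with Assumption \ref{assume1}(2)--(3) then pins down $\blam_i(\Mtilde_t)=b_i\sqrt{\n\d}+o_p(\sqrt{\n\d})$ for $i\le\rank$ and gives $\blam_i(\Mtilde_t)=o_p(\sqrt{\n\d})$ for $i>\rank$, whence $\alphatilde_t=o_p(\n\d)$. Assembling the pieces, the rational term is $o_p(\n\sqrt{\d h_\n/\p})/\Theta_p(\sqrt{\n\d})=o_p(\sqrt{\n h_\n/\p})$, matching the linear term, so $|\tau_{t,i}-\tau_{t+1,i}|=o_p(\sqrt{\n h_\n/\p})$; dividing by $\sqrt{\n\d}$ delivers the stated rate.
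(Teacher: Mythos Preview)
Your argument is correct, and it proceeds along a genuinely different route from the paper's own proof. The paper establishes the lemma by inserting population ``anchors'': it bounds, for each $i\le\rank$,
\[
\left|\frac{\tau_{t,i}}{\sqrt{\n\d}}-\frac{\tau_{t+1,i}}{\sqrt{\n\d}}\right|
\;\le\;
\underbrace{\frac{1}{\sqrt{\n\d}}\bigl|\blam_i(\Mtilde_t)-\sqrt{\lam_i^2+\n\p\sig^2}\bigr|}_{(I)}
+\underbrace{\frac{1}{\sqrt{\n\d}}\bigl|\sqrt{\blam_i^2(\Mtilde_t)-\alphatilde_t}-\lam_i\bigr|}_{(II)}
+(III)+(IV),
\]
where $(III)$ and $(IV)$ are the analogues at time $t+1$, and then invokes the intermediate estimates \eqref{lam-lam}, \eqref{lam2-lam2}, \eqref{rhohat-npsig2} already obtained inside the proof of Theorem~\ref{thm:mathInduct}. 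In other words, the paper compares both $\tau_{t,i}$ and $\tau_{t+1,i}$ to a common deterministic target and reuses internals of the earlier proof.

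Your approach instead compares $\Mtilde_t$ and $\Mtilde_{t+1}$ directly via Weyl's and Mirsky's inequalities, feeding in only the \emph{statement} of Theorem~\ref{thm:mathInduct} (to control $\norm{\Mtilde_{t+1}-\Mtilde_t}_F$) together with a coarse operator-norm bound on $\Mtilde_t-\M$ (to pin down the denominator). This is more self-contained, since it does not reach back into the proof of Theorem~\ref{thm:mathInduct} for the refined eigenvalue estimates; it also handles the case $i>\rank$ explicitly, which the paper's written proof omits. The cost is that you need the extra operator-norm ingredient $\norm{\P(\eps)}_{\mathrm{op}}=O_p(\sqrt{\p\n})$, whereas the paper's route avoids any separate random-matrix input by recycling quantities it has already bounded. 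Both arguments are of comparable length and deliver the same rate.
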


\begin{lemma} \label{corol:assume2}
Let $D_t := \Mtilde_{t} - Z_{t+1}$, where $\Mtilde_{t}$ and $Z_{t}$ are as defined in Algorithm \ref{alg:AI}.
Then, under Assumptions \ref{assume1}-\ref{assume3} and the model setup in Section \ref{setup}, we have
$$\frac{1}{nd}\norm{D_t - D_{t+1}}_F^2 +\frac{2}{nd}\langle D_t - D_{t+1},Z_{t+1} - Z_{t+2} \rangle + o_p\(\frac{h_\n}{\p\d}\) \ge 0.$$
\end{lemma}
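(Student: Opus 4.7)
The plan is to turn the target inequality into an expansion of a squared sum so that the cross term is absorbed into a perfect square, and then to use consistency of consecutive iterates to $\M$ (Theorem~\ref{thm:mathInduct}) to handle the leftover negative contribution.

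Set $A := D_t - D_{t+1}$ and $B := Z_{t+1} - Z_{t+2}$ and rewrite the left-hand side of the claim using the identity $\smallnorm{A}_F^2 + 2\langle A, B\rangle = \smallnorm{A + B}_F^2 - \smallnorm{B}_F^2$. The critical algebraic step is simplifying $A + B$. Substituting $D_s = \Mtilde_s - Z_{s+1}$ into $A$, the $Z_{t+1}$ and $Z_{t+2}$ summands telescope and one finds $A + B = \Mtilde_t - \Mtilde_{t+1}$; since $\Mtilde_s = \P(\Mp) + \Portho(Z_s)$ at every iteration, the observed part $\P(\Mp)$ cancels, giving
\begin{equation*}
A + B \;=\; \Portho(Z_t - Z_{t+1}).
\end{equation*}
Consequently the left-hand side of the claim rewrites as
\begin{equation*}
\frac{1}{\n\d}\smallnorm{\Portho(Z_t - Z_{t+1})}_F^2 \;-\; \frac{1}{\n\d}\smallnorm{Z_{t+1} - Z_{t+2}}_F^2 \;+\; o_p\!\left(\frac{h_\n}{\p\d}\right).
\end{equation*}

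Since the first term is non-negative, it suffices to bound the middle term. Here I would invoke Theorem~\ref{thm:mathInduct}: because it holds at each fixed iteration index, we have $\frac{1}{\n\d}\smallnorm{Z_s - \M}_F^2 = o_p(h_\n/(\p\d))$ for both $s = t+1$ and $s = t+2$, and the triangle inequality gives
\begin{equation*}
\frac{1}{\n\d}\smallnorm{Z_{t+1} - Z_{t+2}}_F^2 \;\le\; \frac{2}{\n\d}\smallnorm{Z_{t+1} - \M}_F^2 + \frac{2}{\n\d}\smallnorm{Z_{t+2} - \M}_F^2 \;=\; o_p\!\left(\frac{h_\n}{\p\d}\right),
\end{equation*}
which is absorbed into the $o_p(h_\n/(\p\d))$ slack on the right and yields the desired non-negativity.

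I do not anticipate a serious obstacle: the proof is a compact algebraic cancellation followed by a direct invocation of Theorem~\ref{thm:mathInduct}. The only thing worth checking carefully is that the theorem can be applied at both iteration indices $t+1$ and $t+2$, which is legitimate because it is stated for arbitrary fixed $t$. Notably, this route bypasses Lemma~\ref{lem:undateTuning} entirely and requires no direct comparison of the adaptive thresholds $\tau_{t,i}$ across iterations -- all of that machinery is hidden inside the two Frobenius-norm bounds that Theorem~\ref{thm:mathInduct} provides uniformly.
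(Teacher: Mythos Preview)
Your proof is correct, and the algebraic cancellation $A+B = \Portho(Z_t - Z_{t+1})$ is clean and works exactly as you describe. The route, however, differs from the paper's. The paper does not complete the square; instead it bounds the cross term directly in absolute value via Cauchy--Schwarz,
\[
\left|\tfrac{1}{\n\d}\langle D_t - D_{t+1},\, Z_{t+1}-Z_{t+2}\rangle\right|
\le \tfrac{1}{\n\d}\smallnorm{D_t - D_{t+1}}_F\,\smallnorm{Z_{t+1}-Z_{t+2}}_F,
\]
then expands $\smallnorm{D_t-D_{t+1}}_F$ by the triangle inequality through $\smallnorm{\Mtilde_t-\Mtilde_{t+1}}_F = \smallnorm{\Portho(Z_t-Z_{t+1})}_F$ and reduces everything to sums of $\smallnorm{Z_s-\M}_F$ via Theorem~\ref{thm:mathInduct}. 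So the paper actually proves the slightly stronger statement that the inner product term itself is $o_p(h_\n/(\p\d))$, after which the non-negativity of $\tfrac{1}{\n\d}\smallnorm{D_t-D_{t+1}}_F^2$ finishes the job. Your polarization identity is more economical for the lemma as stated, since the exact telescoping of $A+B$ lets you discard a non-negative term outright and only bound a single Frobenius norm $\smallnorm{Z_{t+1}-Z_{t+2}}_F^2$; the paper's Cauchy--Schwarz route yields a marginally sharper intermediate conclusion at the cost of handling more terms. One minor note: your closing remark about bypassing Lemma~\ref{lem:undateTuning} is accurate but not a point of contrast---the paper's proof of this lemma does not invoke it either.
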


Lemma \ref{lem:undateTuning} shows that for large $\n$ and $\d$, 
	thresholding parameters of \AIs are stable between iterations so that \AIs behaves similarly to generalized-\SI.
Lemma \ref{corol:assume2} shows how Assumption \ref{assume2} is adapted in \AI.
It implies a possibility of \AIs satisfying Assumption \ref{assume2} asymptotically. 
Although this still does not provide a guarantee of convergence of \AI, numerical results below support this possibility. 

\section{Numerical results} \label{sim}

In this section, we conducted simulations and a real-data analysis to compare \AIs for estimating $\M$ with the four different versions of \SI:
\begin{enumerate}
\item \AI: the proposed algorithm, as summarized in Algorithm \ref{alg:AI};
\item  \SI: the original \SIs algorithm (\cite{mazumder2010});
\item \emph{softImpute-Rank}: \SIs with rank restriction (\cite{hastie2014});
\item \emph{softImpute-ALS}: \emph{Maximum-Margin Matrix Factorization} (\cite{hastie2014}); 
\item \emph{softImpute-ALS-Rank}: \emph{rank-restricted Maximum-Margin Matrix Factorization} in Algorithm 3.1 (\cite{hastie2014}).
\end{enumerate}
\emph{SoftImpute} algorithms were implemented with the \texttt{R} package, \texttt{softImpute} (\cite{hastie2015}). 
The R code for \AIs is available at \url{https://github.com/chojuhee/hello-world/blob/master/adaptiveImpute_Rfunction}.
In this R code, we made two adjustments from Algorithms \ref{alg:initial} and \ref{alg:AI} for technical reasons. 
First, in almost all real world applications that needed matrix completion, the entries of $\M$ are bounded below and above by constants $L_1$ and $L_2$ such that
$$ L_1 \le {\M}_{ij} \le L_2 \ $$
	and smaller or larger values than the constants do not make sense. 
So, after each iteration of \AI, $t\ge1$, we replace 
	the values of $Z_t$ that are smaller than $L_1$ with $L_1$ and  
	the values of $Z_t$ that are greater than $L_2$ with $L_2$. 
Second, the cardinality of the set, $\{-1,1\}^\rank$, that we search over to find $\shat$ in Algorithm \ref{alg:initial} increases exponentially. 
Hence, finding $\shat$ easily becomes a computational bottleneck of \AIs or is even impossible for large $\rank$.
We suggest two possible solutions to this problem. 
One solution is to find $\shat$ by computing
	$\shat_{i} = \text{sign}(\langle \Vhat_i, \bv_i(\Mp) \rangle) \; \text{sign}(\langle \Uhat_i, \bu_i(\Mp) \rangle) \text{ for } i=1,\ldots,\rank$. 
Note that if we use $\V_i$ and $\U_i$ instead of $\bv_i(\Mp)$ and $\bu_i(\Mp)$, this gives us the true sign $s_0$ under Assumption \ref{assume1}. 
The other solution is to use a linear regression. 
Let a vector of the observed entries of $\Mp$ be the dependent variable and
	let a vector of the corresponding entries of $\lamhat_i \Uhat_i \Vhat_i^T$ be the $i$-th column of the design matrix for $i=1,\ldots,\rank$.
Then, we set $\shat$ to be the coefficients of the regression line whose intercept is forced to be 0.
The difference in the results of these two methods are negligible.
In the following experiment, we only reported the results of the former solution for simplicity,
	while the R code provided in \url{https://github.com/chojuhee/hello-world/blob/master/adaptiveImpute_Rfunction} are written for both solutions.

\subsection{Simulation study}

To create $\M=A B^{T} \in \real^{\n\times\d}$, 
	we sampled $A \in \mathbb{R}^{\n \times r}$ and $B \in  \mathbb{R}^{\d \times r}$ to contain i.i.d. uniform$[-5,5]$ random variables  
	and a noise matrix $\epsilon \in \mathbb{R}^{n \times d}$ to contain i.i.d. $\mathcal{N}(0,\sigma^2)$.  
Then, each entry of $\M + \epsilon$ was observed independently with probability $\p$.
Across simulations, $\n =1700$, $\d = 1000$, $r \in \{5,10,20,50\}$, $\sigma$ varies from 0.1 to 50, and $\p$ varies from 0.1 to 0.9.  
For each simulation setting, the data was sampled 100 times and the errors were averaged. 

To evaluate performance of the algorithms, we measured three different types of errors; test, training, and total errors;
the test error, $\mbox{\texttt{Test}}(\hat M) = \smallnorm{\Portho(\Mhat-\M)}_F^2/\smallnorm{\Portho(\M)}_F^2$, 
	represents the distance between the estimate $\hat{M}$ and the parameter $M_0$ measured on the unobserved entries,  
	the training error, $\mbox{\texttt{Training}}(\hat M) = \smallnorm{\P(\Mhat-\M)}_F^2/\smallnorm{\P(\M)}_F^2$, 
		the distance measured on the observed entries, and 
	the total error, $\mbox{\texttt{Total}}(\Mhat) = \smallnorm{\Mhat-\M}_F^2/\smallnorm{\M}_F^2$, 
		the distance measured on all entries.
For ease of comparison, Figure \ref{relEffvsp} and \ref{relEffvsSigma2} plot the relative efficiencies with respect to \SI-Rank.
For example, the relative test efficiency of \AIs with respect to \SI-Rank is defined as 
	$\mbox{\texttt{Test}}(\hat M_{rank}) / \mbox{\texttt{Test}}(\hat M_{adapt})$,
	where $\hat M_{adapt}$ is an estimate of \AIs and $\hat M_{rank}$ is an estimate of \emph{softImpute-Rank}. 
The relative total and training efficiencies with respect to \SI-Rank are defined similarly.

We used the best tuning parameter for the algorithms in comparison. 
Specifically, for algorithms with rank restriction (including \AI), we provided the true rank (i.e. 5, 10, 20, or 50).  
For \SI-type algorithms, an oracle tuning parameter was chosen to minimize the total error.

\begin{figure}[p]
\centering
\includegraphics[width=5.3in]{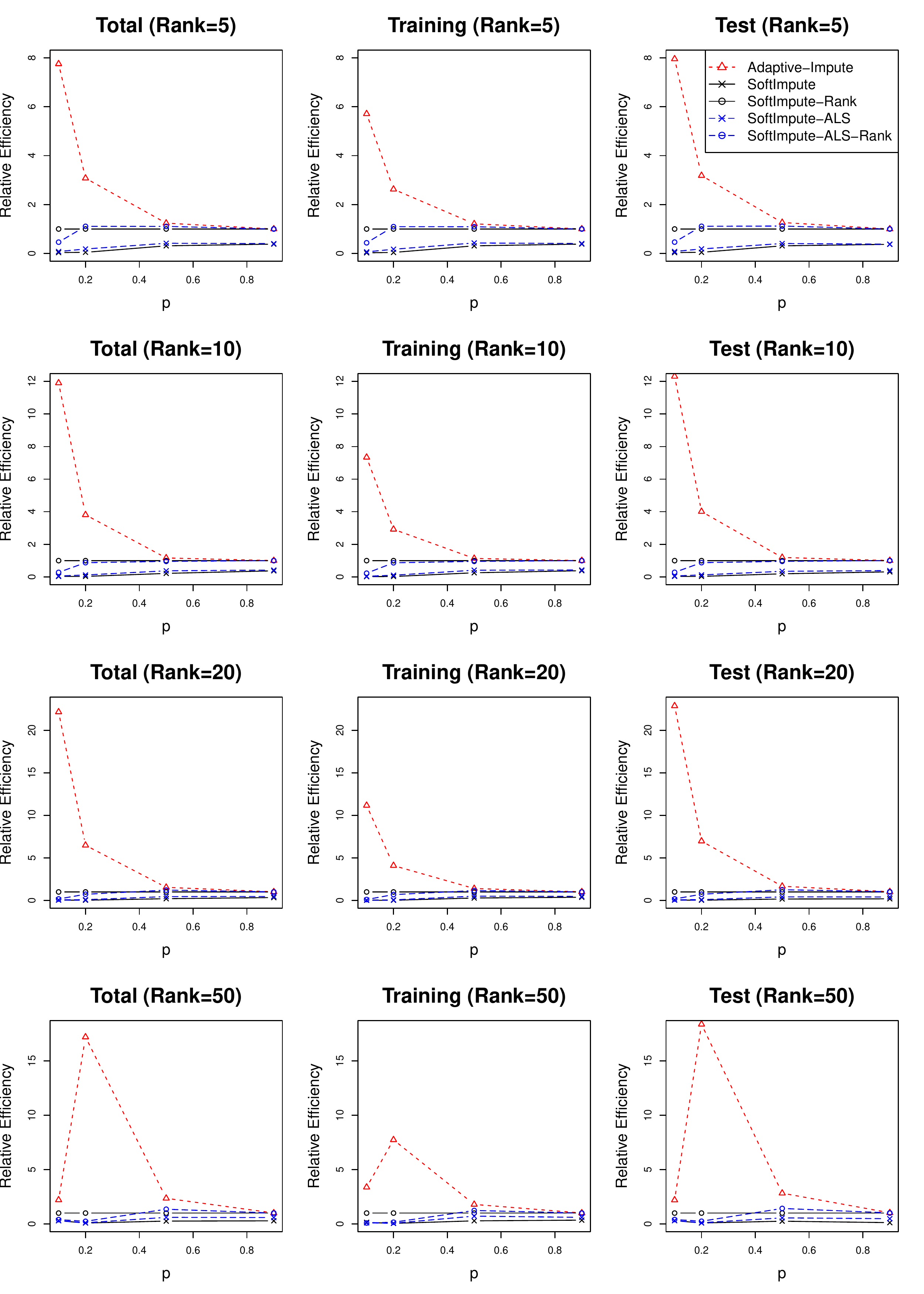}
\caption{The relative efficiency plotted against the probability of observing each entry, $\p$, when $\sig=1$. Training errors are measured over the observed entries, test errors over the unobserved entries, and total errors over all entries.}
\label{relEffvsp}
\end{figure}
\begin{figure}[p]
\centering
\includegraphics[width=5.3in]{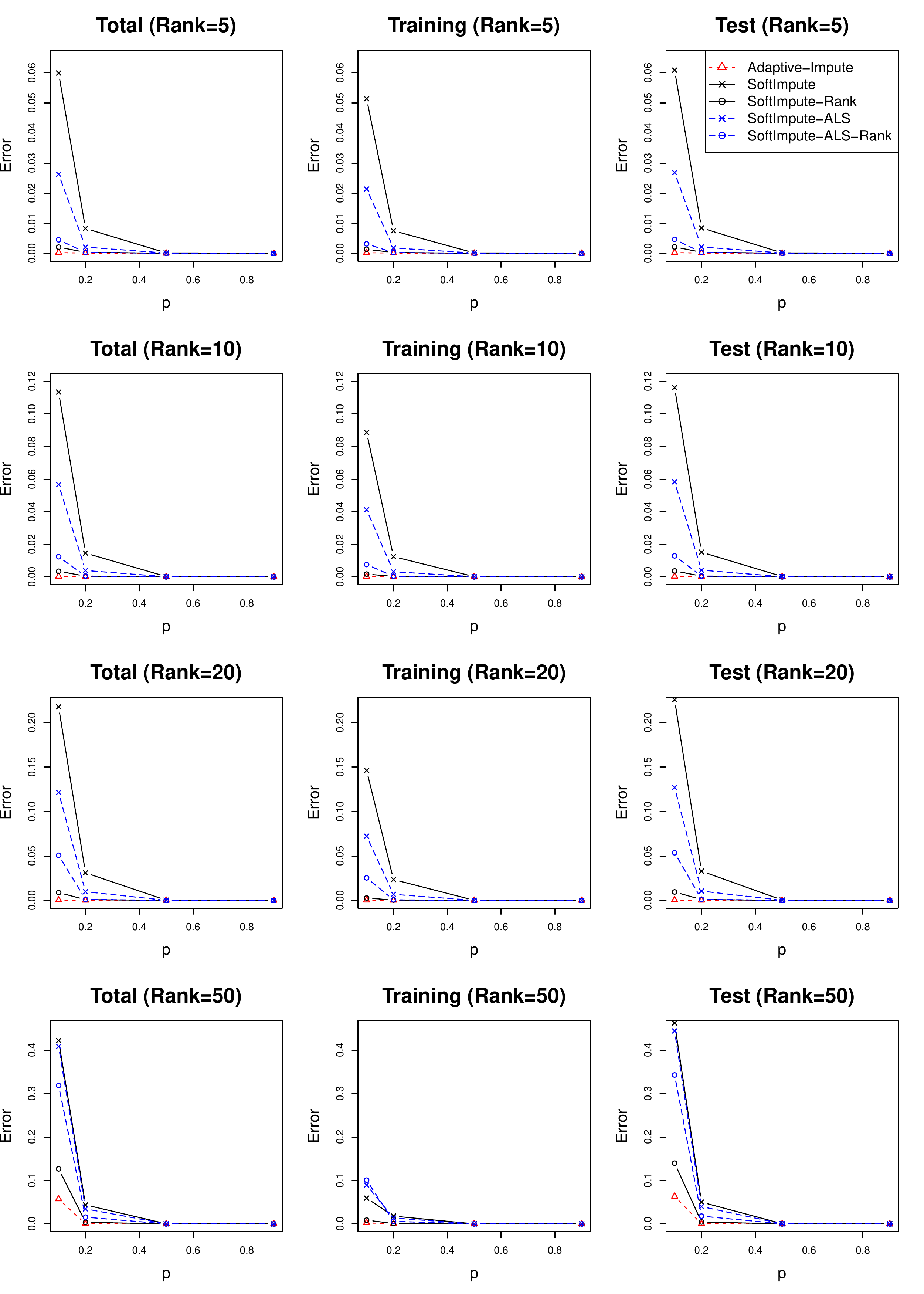}
\caption{Change of the absolute errors when the probability of observing each entry, $\p$ increases and $\sig=1$.}
\label{absEffvsp}
\end{figure}

Figure \ref{relEffvsp} shows the change of the relative efficiencies as the probability of observing each entry, $\p$, increases with $\sig=1$. 
Three columns of plots in Figure \ref{relEffvsp} correspond to three different types of errors and four rows of plots to four different values of the rank. 
In all cases, \AIs outperforms the competitors and works especially better when $\p$ is small.
Among \SI-type algorithms, the algorithms with rank constraint (i.e. \SI-Rank and \SI-ALS-Rank) perform better than the ones without (i.e. \SIs and \SI-ALS).
Figure \ref{absEffvsp} shows the change of the absolute errors that are used to compute relative efficiencies in Figure \ref{relEffvsp} as the probability of observing each entry, $\p$, increases.

Figure \ref{relEffvsSigma2} shows the change of the log relative efficiencies as the standard deviation (SD) of each entry of $\eps$, $\sigma$, increases with $\p=0.1$.
When the noise level is under 15, \AIs outperforms the competitors, but when the noise level is over 15, \SI-type algorithms start to outperform \AI. 
Hence, \SI-type algorithms are more robust to large noises than \AI. 
It may be because when there exist large noises dominating the signals, the conditions for convergence presented in Section \ref{generalSI} are not satisfied.  
In real life applications, however, it is not common to observe such large noises that dominate the signals.
Figure \ref{absEffvsSigma2} shows the change of the absolute errors that are used to compute relative efficiencies in Figure \ref{relEffvsSigma2}.

\begin{figure}[p]
\centering
\includegraphics[width=5.3in]{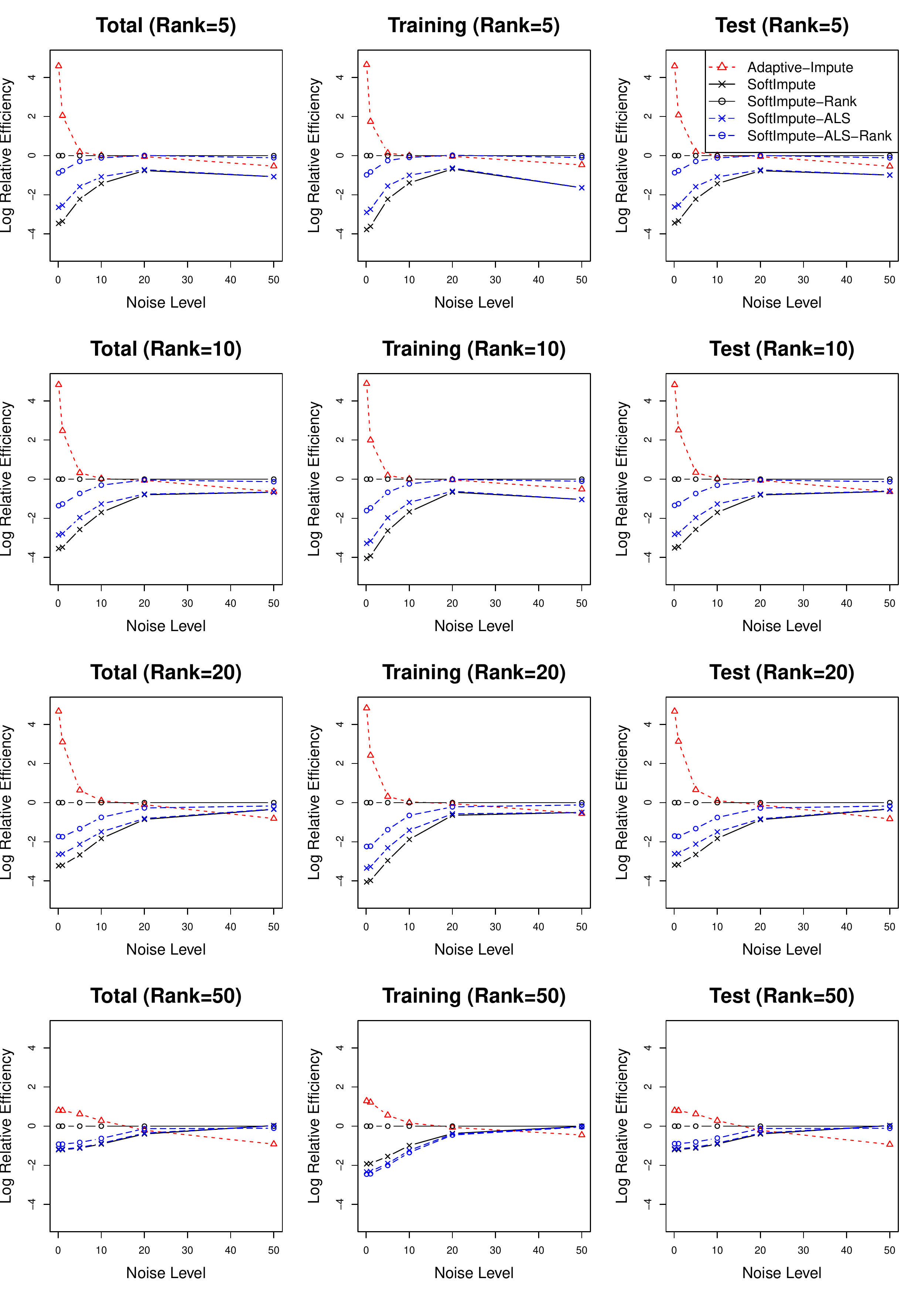}
\caption{The log relative efficiency plotted against the SD of each entry of $\eps$ when $\p=0.1$. Training errors are measured over the observed entries, test errors over the unobserved entries, and total errors over all entries.}
\label{relEffvsSigma2}
\end{figure}
\begin{figure}[p]
\centering
\includegraphics[width=5.3in]{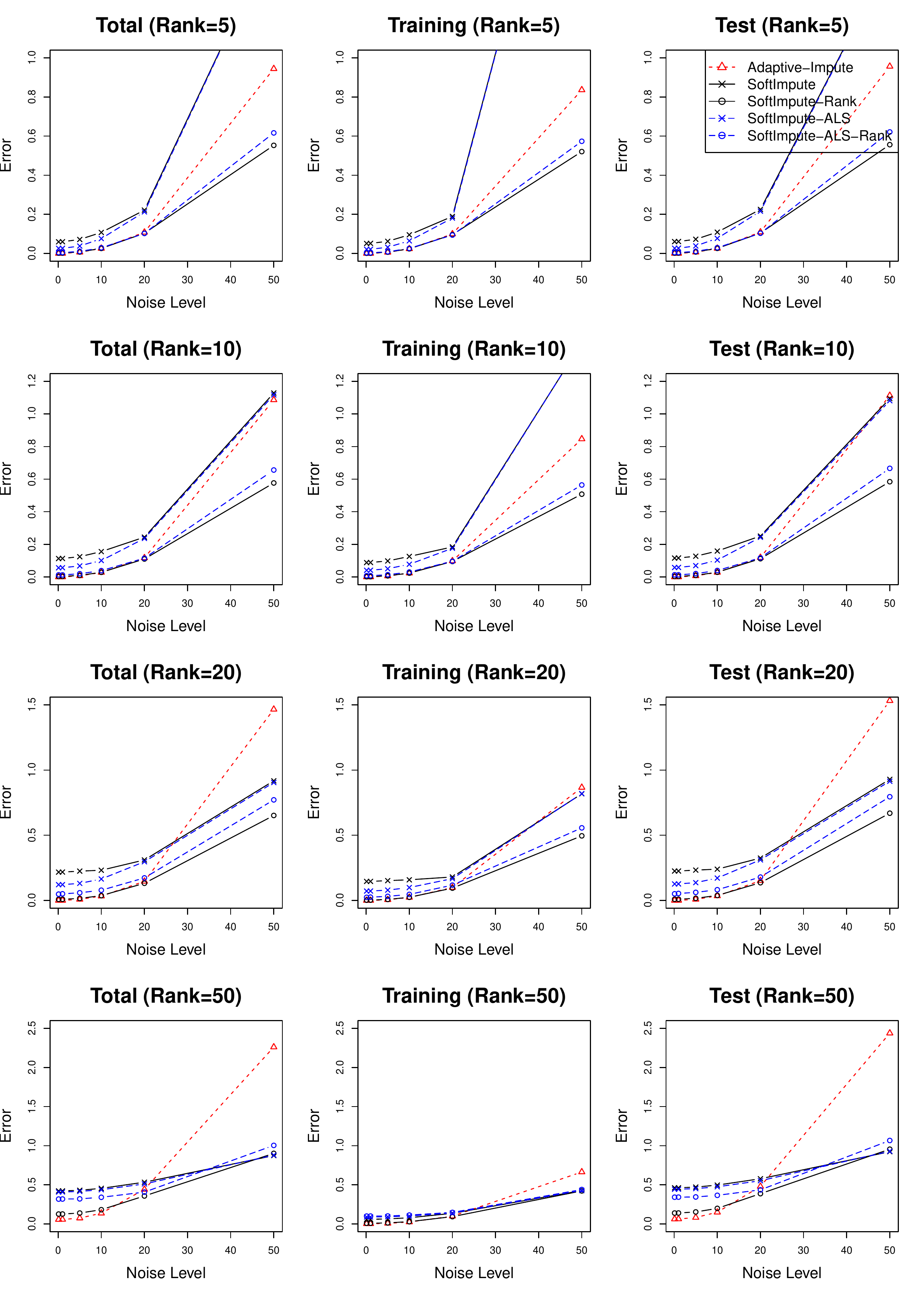}
\caption{Change of the absolute errors when the SD of each entry of $\eps$ increases and $\p=0.1$.}
\label{absEffvsSigma2}
\end{figure}

Figure \ref{convergence} shows convergence of the iterates of \AIs to the underlying low-rank matrix over iterations; 
	that is, the change of log $\mbox{\texttt{Total}}(Z_t), \mbox{\texttt{Training}}(Z_t)$, and $\mbox{\texttt{Test}}(Z_t)$ errors as $t$ increases.
Across all plots, $\n =1700$, $\d = 1000$, $\p=0.1$, and the errors were averaged over 100 replicates. 
In all cases, we observe that \AIs converges well. 
Particularly, the smaller value of noise and/or rank is, the faster \AIs converges.

\begin{figure}[p]
\centering
\includegraphics[width=5.4in]{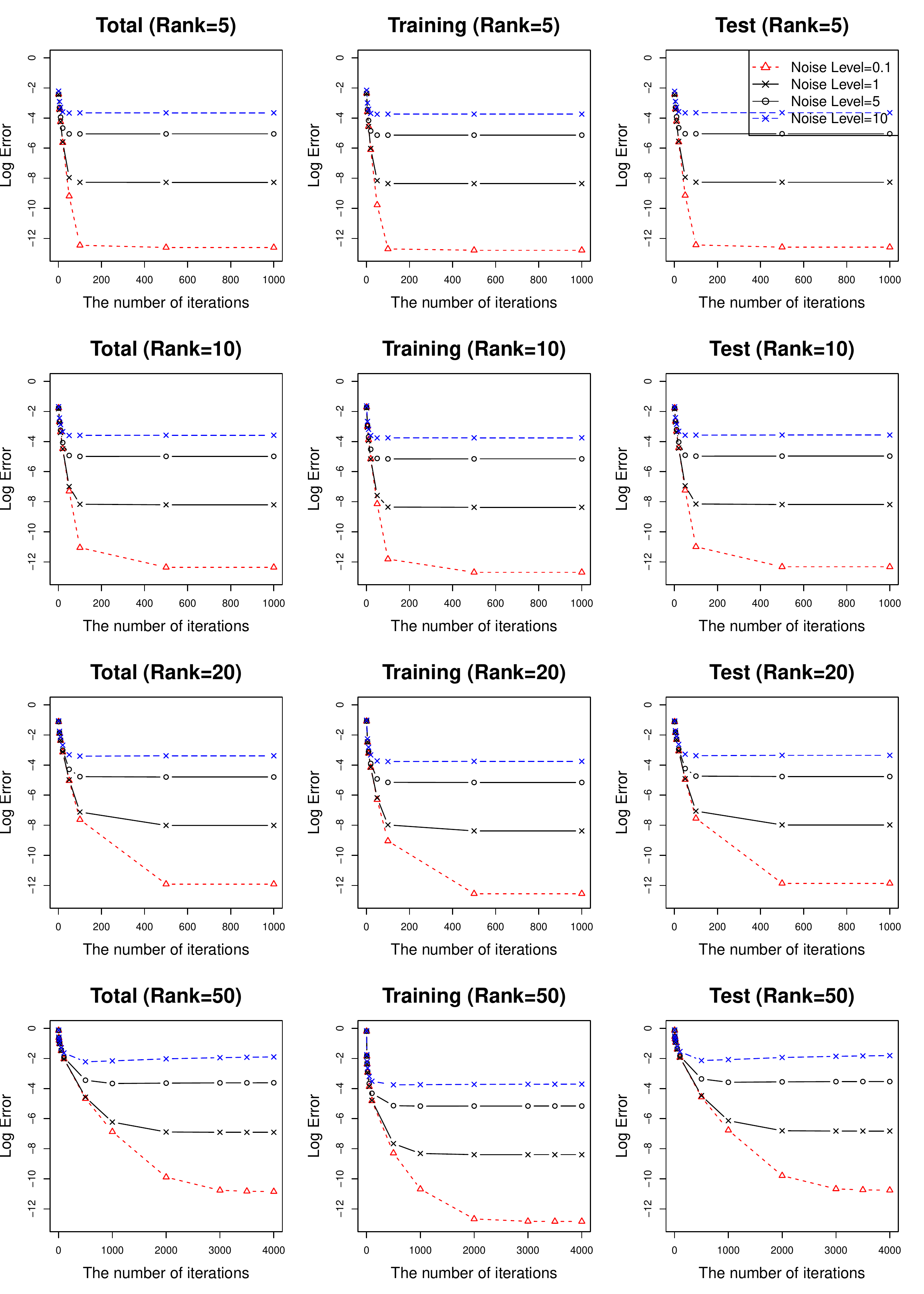}
\caption{Convergence of the iterates of \AIs to the underlying low-rank matrix. 
	In all plots, $\n =1700$, $\d = 1000$, $\p=0.1$, and all points were averaged over 100 replicates. }
\label{convergence}
\end{figure}

\subsection{A real data example} \label{realdata}

We applied \AIs and the competing methods to a real data, MovieLens 100k (\cite{movielens100k}). 
We used 5 training and 5 test data sets from 5-fold CV which are publicly available in \cite{movielens100k}.
For the rank used in \AIs and \SI-type algorithms with rank constraint, we chose 3 based on a scree plot (Figure \ref{screePlot}).
Lemma 2 in \cite{cho2015} provides justification of using the scree plot and the singular value gap to choose the rank. 
\begin{figure}[!ht]
\centering
\includegraphics[width=6in]{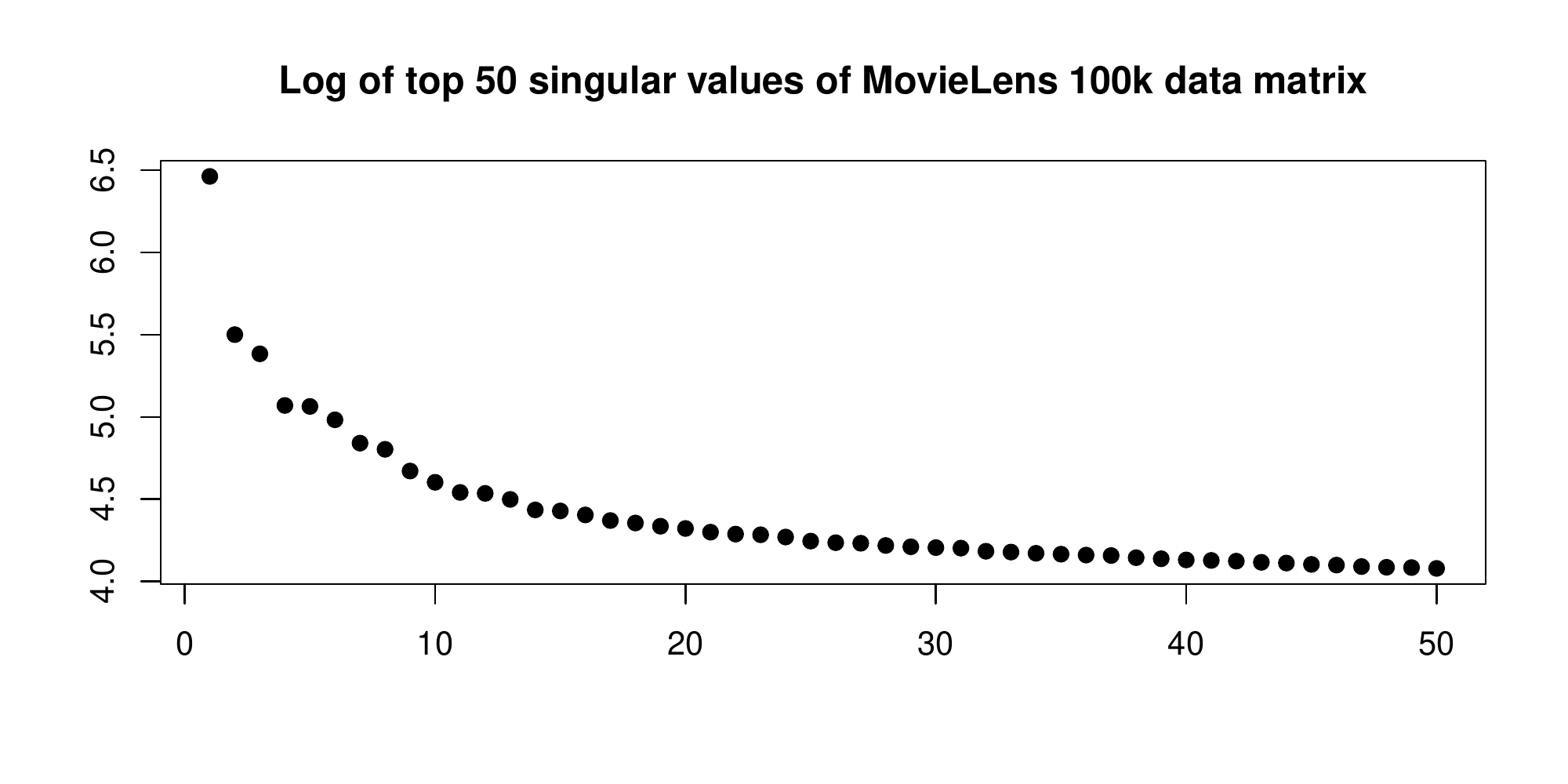}
\caption{Log of the top 50 singular values of the MovieLens 100k data matrix (\cite{movielens100k}). }
\label{screePlot}
\end{figure}
For the thresholding parameters for \SI-type algorithms, we chose the optimal values which result in the smallest test errors. 
The test errors were measured by normalized mean absolute error (NMAE) (\cite{herlocker2004}),
$$\frac{1}{(M_{max}-M_{min}) | \Omega_{test} |} \sum_{(i,j) \in \Omega_{test}} |\hat{M}_{ij} - M_{ij}|,$$
where the set $\Omega_{test}$ contains indices of the entries in test data, $|\Omega_{test}|$ is the cardinality of $\Omega_{test}$,
	$M_{max} = \max\{ \{M_{i,j}\}\setminus 0\}$ is the largest entry of $M$, and 
	$M_{min} = \min\{ \{M_{i,j}\}\setminus 0\}$ is the smallest entry of $M$.

Figure \ref{movieLens} summarizes the resulting NMAEs.
Five points in the x-axis correspond to the 5-fold CV test data, the y-axis represents the values of NMAE, and 
	the five different lines on the plane correspond to the 5 different algorithms in comparison.
We observe that \AIs outperforms all of the other algorithms. 
Specifically, the test errors of \AIs reduce those of \SI-type algorithms by 6\%-16\%.
Among \SI-type algorithms, the ones with rank constraint (i.e. \SI-Rank and \SI-ALS-Rank) performs better than the ones without (i.e. \SI and \SI-ALS).
This is the same result to the simulation results.

\begin{figure}[!ht]
\centering
\includegraphics[width=5.4in]{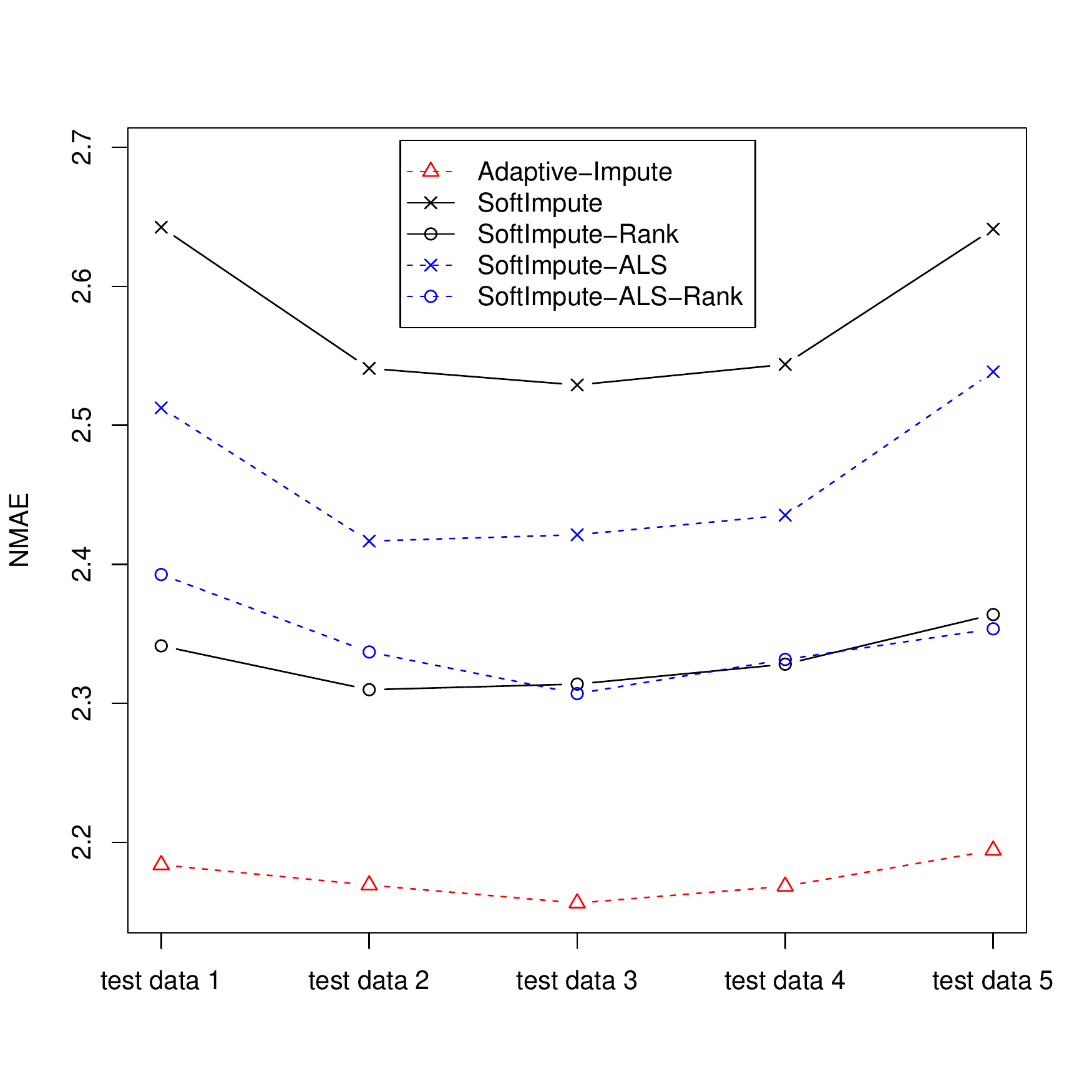}
\caption{The NMAEs of \AIs and its competitors measured in 5-fold CV test data from MovieLens 100k (\cite{movielens100k}). }
\label{movieLens}
\end{figure}



\section{Discussion} \label{discuss}
Choosing the right thresholding parameter for matrix completion algorithms using thresholded SVD often poses empirical challenges. 
This paper proposed a novel thresholded SVD algorithm for matrix completion, \AI, which employs 
	a theoretically-justified and data-dependent set of thresholding parameters.
We established its theoretical guarantees on statistical performance and showed its strong performances in both simulated and real data.
It provides understanding on the effects of thresholding and the right threshold level. 
Yet, there is a newly open problem. 
Although we proposed a reasonable remedy in the paper, the choice of the rank of the underlying low-rank matrix is of another great practical interest.
To estimate the rank and completely automate the entire procedure of \AIs would be a potential direction for future research.


\section{Proofs} \label{proofs2}

Denote by $C$ and $C_1$  generic constants whose values are free of $n$ and $p$ and may change from appearance to appearance.
Also, denote by $\smallnorm{v}_2$ the $\ell_2$-norm for any vector $v \in \real^\d$ and 
	by $\smallnorm{A}_2$ the spectral norm, the largest singular value of $A$, for any matrix $A \in \real^{\n\times\d}$.

\subsection{Proof of Theorem \ref{thm:mathInduct}}
\begin{proof}[Proof of Theorem \ref{thm:mathInduct}]
We have
\begin{eqnarray*}
\Mtilde_t 
&=& \P(\Mp) + \Portho(Z_t) 
\cr
&=& \y\cdot(\M + \eps) + (1_{\n}1_{\d}^T-\y)\cdot Z_t
\cr
&=& \M + \y\cdot\eps + (1_{\n}1_{\d}^T-\y)\cdot\eta_t,
\end{eqnarray*}
where $1_{\n}$ and $1_{\d}$ are vectors of length $\n$ and $\d$, respectively, filled with ones and 
	$\eta_t = Z_t -\M$, and, $A \cdot B= (A_{ij} B_{ij}) _{1\leq i \leq n, 1 \leq j \leq d}$ for any $A$ and $B \in \mathbb{R}^{n \times d}$.
Assume that 
\begin{eqnarray} \label{init-assume}
\frac{1}{\sqrt{\n\d}}\norm{\eta_t}_F = o_p\(\sqrt{\frac{h_\n}{\p\,\d}}\).
\end{eqnarray}
 Then, simple algebraic manipulations show for large $\n$
\begin{eqnarray} \label{eta_t+1}
\frac{1}{\sqrt{\n\d}}\norm{\eta_{t+1}}_F
&=&\frac{1}{\sqrt{\n\d}}\norm{Z_{t+1}-\M}_F
\cr
&=&\frac{1}{\sqrt{\n\d}}\norm{\sum_{i=1}^\rank \sqrt{\blam_i^2(\Mtilde_t) - \alphatilde_{t}} \, \bu_i(\Mtilde_t) \bv_i(\Mtilde_t)^T 
	- \sum_{i=1}^\rank \lam_i\U_i\V_i^T}_F
\cr
&\leq& \frac{C}{\sqrt{\n\d}}\sum_{i=1}^\rank \Bigg\{ \left| \sqrt{\blam_i^2(\Mtilde_t) - \alphatilde_{t}} - \lam_i \right|   \norm{\U_i\V_i^T}_F 
\cr
&&\quad\quad\quad\quad
	+ \lam_i \norm{\(\bu_i(\Mtilde_t)-\U_i\O_i\) \V_i^T }_F 
	+ \lam_i \norm{ \U_i \(\bv_i(\Mtilde_t)-\V_i\Q_i\)^T }_F \Bigg\}
\cr
&\le&C \sum_{i=1}^\rank \Bigg\{ \frac{1}{\sqrt{\n\d}}\left| \sqrt{\blam_i^2(\Mtilde_t) - \alphatilde_{t}} - \lam_i \right| 
\cr
&&\quad\quad\quad\quad
	+ \frac{\lam_i}{\sqrt{\n\d}} \norm{\bu_i(\Mtilde_t)-\U_i\O_i}_F 
	+ \frac{\lam_i}{\sqrt{\n\d}} \norm{\bv_i(\Mtilde_t)-\V_i\Q_i}_F \Bigg\},
\end{eqnarray}
where $\O_i$ and $\Q_i$ are in $\{-1,1\}$ and minimize $\norm{\bu_i(\Mtilde_t)-\U_i\O_i }_F$ and $\norm{ \bv_i(\Mtilde_t)-\V_i\Q_i }_F$, respectively.

To find the order of \eqref{eta_t+1}, first consider the term $\norm{\bv_i(\Mtilde_t)-\V_i\O_i }_F$.
By Davis-Kahan Theorem (Theorem 3.1 in \cite{li1998two}) and Proposition 2.2 in \cite{vu2013}, 
\begin{eqnarray} \label{utilde-u}
\norm{\bv_i(\Mtilde_t)-\V_i\O_i }_F \le \frac{\frac{1}{\n\d}\norm{\(\Mtilde_t^T\Mtilde_t - \[ \M^T\M +\n \p \sig^2 I \] \)\V_i}_F }{\left| \frac{1}{\n\d}\( \lam_i^2+\n \p \sig^2 - \blam_{i+1}^2(\Mtilde_t) \) \right| }.
\end{eqnarray}
Consider the numerator of \eqref{utilde-u}.
We have 
\begin{eqnarray} \label{vtilde-v}
&&\frac{1}{\n\d}\norm{\(\Mtilde_t^T\Mtilde_t - \[ \M^T\M + \n \p \sig I\]\)\V_i}_F
\cr
&&\le \frac{1}{\n\d}\bigg\{
	\norm{\(\y\cdot\eps\)^T\(\y\cdot\eps\)\V_i- \n \p \sig^2\V_i}_F + \norm{\[(1_{\n}1_{\d}^T-\y)\cdot\eta_t\]^T\[(1_{\n}1_{\d}^T-\y)\cdot\eta_t\]\V_i}_F
\cr
&&\quad\quad\quad\quad
	+ \norm{\M^T\(\y\cdot\eps\)\V_i}_F + \norm{\(\y\cdot\eps\)^T\M\V_i}_F
\cr
&&\quad\quad\quad\quad
	+ \norm{\M^T\[(1_{\n}1_{\d}^T-\y)\cdot\eta_t\]\V_i}_F + \norm{\[(1_{\n}1_{\d}^T-\y)\cdot\eta_t\]^T\M\V_i}_F
\cr
&&\quad\quad\quad\quad
	+ \norm{\[(1_{\n}1_{\d}^T-\y)\cdot\eta_t\]^T\(\y\cdot\eps\)\V_i}_F + \norm{\(\y\cdot\eps\)^T\[(1_{\n}1_{\d}^T-\y)\cdot\eta_t\]\V_i}_F \bigg\}
\cr
&&= \frac{1}{\n\d} \Big\{ O_p\(\p\sqrt{\n\d}\) + o_p\(\frac{\n h_\n}{\p}\) + O_p\(\sqrt{\p\d\n^2}\) + O_p\(\sqrt{\p\d^2\n}\)
\cr
&&\quad\quad\quad\quad
	+ o_p\(\sqrt{\frac{h_\n \d\n^2}{\p}}\) + o_p\(\sqrt{\frac{h_\n \d\n^2}{\p}}\) + o_p\(\sqrt{h_\n \n^2}\) + o_p\(\sqrt{h_\n \d\n^2}\) \Big\}
\cr
&&=o_p\(\sqrt{\frac{h_\n}{\p\d}}\),
\end{eqnarray}
where the first equality holds due to \eqref{mod}, Assumption \ref{assume1}(2), \eqref{init-assume}, and \eqref{epsyTepsy} and \eqref{yepsu-yepsv} below.
We have 
\begin{eqnarray} \label{epsyTepsy}
&&\expect \norm{\(\y\cdot\eps\)^T\(\y\cdot\eps\)\V_i- \n \p \sig^2\V_i}_F^2
\cr
&&= \expect\left\{\sum_{h=1}^\d \[ \sum_{k=1}^\n \sum_{j=1}^\d \( \y_{kh}\y_{kj}\eps_{kh}\eps_{kj}\V_{ij} -\p\sig^2\V_{ih}\I_{(j=h)} \) \]^2\right\}
\cr
&&= \sum_{h=1}^\d \sum_{k=1}^\n \sum_{j=1}^\d \expect\( \y_{kh}\y_{kj}\eps_{kh}\eps_{kj}\V_{ij} -\p\sig^2\V_{ih}\I_{(j=h)} \)^2
\cr
&&= \sum_{k=1}^\n \sum_{j\ne h}^\d \V_{ij}^2 \,\expect\( \y_{kh}^2\y_{kj}^2\eps_{kh}^2\eps_{kj}^2\) 
		+ \sum_{k=1}^\n \sum_{j=h}^\d \V_{ij}^2 \,\expect\( \y_{kj}^2\eps_{kj}^2 -\p\sig^2 \)^2
\cr
&&= O\(\p^2\n\d\),
\end{eqnarray}
where $\V_{ij}$ is the $j$-th element of $\V_i$.
Similarly, we have
\begin{eqnarray} \label{yepsu-yepsv}
\expect\smallnorm{\(\y\cdot\eps\)\V_i}_F^2=O\(\p\n\),
\expect\smallnorm{\U_i^T\(\y\cdot\eps\)}_F^2=O\(\p\d\),
\text{ and }\,
\expect\smallnorm{\y\cdot\eps}_F^2=O\(\p\n\d\).
\end{eqnarray}

Consider the denominator of \eqref{utilde-u}.
 By Weyl's theorem (Theorem 4.3 in \cite{li1998}), we have
 \begin{eqnarray} \label{deltaExist}
&&\max_{1\le i\le\d}\frac{1}{\n\d}|\lam_i^2  +\n \p \sig^2 -  \blam_{i}^2(\Mtilde_t)| 
\cr
&&\le \frac{1}{\n\d}\norm{\Mtilde_t^T\Mtilde_t -  \[ \M^T\M + \n \p \sig ^2 I \]}_2
\cr
&&\le \frac{1}{\n\d}\bigg\{ \norm{\(\y\cdot\eps\)^T\(\y\cdot\eps\)-  \n \p \sig ^2 I }_2 + \norm{\[(1_{\n}1_{\d}^T-\y)\cdot\eta_t\]^T\[(1_{\n}1_{\d}^T-\y)\cdot\eta_t\]}_2 
\cr
&&\quad\quad\quad\quad
	+ 2\norm{ \M^T\(\y\cdot\eps\)}_2 + 2 \norm{\M^T\[(1-\y)\cdot\eta_t\]}_2 + 2\norm{\(\y\cdot\eps\)^T\[(1_{\n}1_{\d}^T-\y)\cdot\eta_t\]}_2 \bigg\}
\cr
&&= \frac{1}{\n\d}\bigg\{ O_p\(\p\sqrt{\n\d^2}\) + o_p\(\frac{\n h_\n}{\p}\) + O_p\(\sqrt{\p\,\d\n^2}\) + o_p\(\sqrt{\frac{\d\n^2h_\n}{\p}}\) + o_p\(\sqrt{\d\n^2h_\n}\) \bigg\}
\cr
&&=o_p(1),
\end{eqnarray}
where the last two lines holds similarly to \eqref{vtilde-v}.

Thus, by  \eqref{deltaExist} and \eqref{vtilde-v},
\begin{eqnarray} \label{v-v}
\smallnorm{\bv_i(\Mtilde_t)-\V_i\O_i }_F = o_p\(\sqrt{\frac{h_\n}{\p\d}}\). 
\end{eqnarray}

Secondly, similar to the proof of \eqref{v-v},
we can show  $\smallnorm{\bu_i(\Mtilde_t)-\U_i\O_i }_F = o_p\(\sqrt{h_\n/\p\d}\)$.

Lastly, consider the term $\frac{1}{\sqrt{\n\d}}\left| \sqrt{\blam_i^2(\Mtilde_t) - \alphatilde_{t}} - \lam_i \right|$. 
By Taylor's expansion, there is $\lam_\ast^2$  between $\blam_i^2(\Mtilde_t) - \alphatilde_{t}$ and $\lam_i^2$ such that 
\begin{eqnarray} \label{lam-lam}
&&\frac{1}{\sqrt{\n\d}}\left| \sqrt{\blam_i^2(\Mtilde_t) - \alphatilde_{t}} - \lam_i \right| 
\cr
&&= \frac{1}{\sqrt{\n\d}}\left| \frac{1}{2\lam_\ast}\( \blam_i^2(\Mtilde_t) - \alphatilde_{t} - \lam_i^2 \) \right|
\cr
&&\le \frac{1}{2\lam_\ast\sqrt{\n\d}}\left| \blam_i^2(\Mtilde_t) - (\lam_i^2+\n\p\sig^2) \right| 
	+ \frac{1}{2\lam_\ast\sqrt{\n\d}}\left| \alphatilde_{t} - \n\p\sig^2 \right|.
\end{eqnarray}
We need to find the convergence rates of $\frac{1}{\n\d}\left| \blam_i^2(\Mtilde_t) - (\lam_i^2+\n\p\sig^2) \right|$ and $\frac{1}{\n\d}\left| \alphatilde_{t} - \n\p\sig^2 \right|$.
Let $\V_c = \(\V_{\rank+1},\ldots,\V_\d\) \in \real^{\d\times(\d-\rank)}$ be a matrix 
	such that $\V_c^T\V_c = I_{\d-\rank}$ and $\V^T\V_c = 0_{\rank\times(\d-\rank)}$
and let $\widetilde\V_{t} = \(\bv_{1}(\Mtilde_t), \ldots, \bv_\d(\Mtilde_t)\) \in \real^{\d\times\rank}$ and 
	$\widetilde\V_{tc} = \(\bv_{\rank+1}(\Mtilde_t), \ldots, \bv_\d(\Mtilde_t)\) \in \real^{\d\times(\d-\rank)}$
	so that $\widetilde\V_{t}^T\widetilde\V_{tc}=0_{\rank\times(\d-\rank)}$.
Also, let $\O = \diag(\O_1, \ldots, \O_\rank)$ and $\O_c = \diag(\O_{\rank+1}, \ldots, \O_\d)$, where 
$$\O_i := \argmin_{o \in \{-1,1\}} \norm{\V_i\,o-\bv_i(\Mtilde_t)}_2^2 \quad\text{for }\; i=1,\ldots,\d.$$
Then, we have
\begin{eqnarray} \label{lam2-lam2}
&&\frac{1}{\n\d}\left| \blam_i^2(\Mtilde_t) - (\lam_i^2+\n\p\sig^2) \right|
\cr
&&= \frac{1}{\n\d}\bigg| \bv_i(\Mtilde_t)^T\Mtilde_t^T\Mtilde_t \bv_i(\Mtilde_t)  
	- \V_i^T\( {\M}^T{\M} + \n \p \sig^2 I \) \V_i \bigg|
\cr
&&\le \frac{1}{\n\d}\bigg| \V_i^T \[\Mtilde_t^T\Mtilde_t - \( {\M}^T{\M}+ \n \p \sig^2 I \) \] \V_i   \bigg|
	+ \frac{1}{\n\d}\bigg| \bv_i(\Mtilde_t)^T\Mtilde_t^T\Mtilde_t \bv_i(\Mtilde_t) - \V_i^T\Mtilde_t^T\Mtilde_t\V_i \bigg| 
\cr
&&\le o_p\( \sqrt{\frac{h_n}{\p\d}} \) + \frac{1}{\n\d}\bigg| \bv_i(\Mtilde_t)^T\Mtilde_t^T\Mtilde_t \bv_i(\Mtilde_t) - \V_i^T\Mtilde_t^T\Mtilde_t\V_i \bigg| 
\cr
&&= o_p\(\sqrt{\frac{h_\n}{\p\d}}\),
\end{eqnarray}
where the second  inequality can be derived by the similar way to the proof of \eqref{vtilde-v}, 
and  the last equality is due to \eqref{eq 11 : Thm asym} below.
Simple algebraic manipulations show
\begin{eqnarray} \label{eq 11 : Thm asym}
&&\frac{1}{\n\d}\bigg| \bv_i(\Mtilde_t)^T\Mtilde_t^T\Mtilde_t \bv_i(\Mtilde_t) - \V_i^T\Mtilde_t^T\Mtilde_t\V_i \bigg|  \cr
&&=   \frac{1}{\n\d}\bigg| \[\V_i\O_i-\bv_i(\Mtilde_t)\]^T\Mtilde_t^T\Mtilde_t \[\V_i\O_i-\bv_i(\Mtilde_t)\] + 2\blam_i^2(\Mtilde_t)\[\V_i\O_i-\bv_i(\Mtilde_t)\]^T\bv_i(\Mtilde_t) \bigg|
\cr
&&\le  \frac{2\blam_1^2(\Mtilde_t)}{\n\d}\norm{\V_i\O_i-\bv_i(\Mtilde_t)}_2^2
\cr
&&=   o_p\(\frac{ h_\n}{\p\d }\),
\end{eqnarray}
where the last equality is due to \eqref{deltaExist} and \eqref{v-v}.
Also,
\begin{eqnarray} \label{rhohat-npsig2}
&&\frac{1}{\n\d}\left| \alphatilde_{t} - \n\p\sig^2 \right|
\cr
&&= \frac{1}{\n\d}\bigg| \frac{1}{\d-\rank}\sum_{j=\rank+1}^\d \bv_j(\Mtilde_t)^T\Mtilde_t^T\Mtilde_t\bv_j(\Mtilde_t) - \n\p\sig^2 \bigg| 
\cr
&&\le \frac{1}{\n\d}\bigg| \frac{1}{\d-\rank}\sum_{j=\rank+1}^\d \V_j^T\[ \Mtilde_t^T\Mtilde_t - \( \M^T\M + \n \p \sig ^2 I\)\] \V_j \bigg| 
\cr
&&\quad\quad\quad\quad
	+ \frac{1}{\n\d}\bigg| \frac{1}{\d-\rank}\sum_{j=\rank+1}^\d \[ \bv_j(\Mtilde_t)^T \Mtilde_t^T\Mtilde_t \bv_j(\Mtilde_t) - \V_j^T \Mtilde_t^T\Mtilde_t \V_j \] \bigg| \cr
&&=   o_p\(\sqrt{\frac{h_\n}{\p\d} }\)+ \frac{1}{\n\d}\bigg| \frac{1}{\d-\rank}\sum_{j=\rank+1}^\d \[ \bv_j(\Mtilde_t)^T \Mtilde_t^T\Mtilde_t \bv_j(\Mtilde_t) - \V_j^T \Mtilde_t^T\Mtilde_t \V_j \] \bigg| \cr 
&&=  o_p\(\sqrt{ \frac{h_\n}{\p\d} }\),
\end{eqnarray}
where the second equality can be derived by the similar way to the proof of \eqref{vtilde-v}, and the last equality is due to  \eqref{eq 12 : Thm asym} below.
Similar to the proof of \eqref{eq 11 : Thm asym}, we have
\begin{eqnarray}\label{eq 12 : Thm asym}
&&  \frac{1}{\n\d(\d-\rank)}\sum_{j=\rank+1}^\d \bigg| \bv_j(\Mtilde_t)^T \Mtilde_t^T\Mtilde_t \bv_j(\Mtilde_t) - \V_j^T \Mtilde_t^T\Mtilde_t \V_j \bigg| 
\cr
&&\le  \frac{1}{\n\d(\d-\rank)}\sum_{j=\rank+1}^\d 2\blam_1^2(\Mtilde_t)\norm{\V_j\O_j-\bv_j(\Mtilde_t)}_2^2
\cr
&&\le   \frac{2\blam_1^2(\Mtilde_t)}{\n\d(\d-\rank)}\norm{\V_c\O_c-\widetilde\V_{tc}}_F^2
\cr
&&\le  \frac{4\blam_1^2(\Mtilde_t)}{\n\d(\d-\rank)}\norm{\V_c\V_c^T - \widetilde\V_{tc}\widetilde\V_{tc}^T}_F^2
\cr
&&=   \frac{4\blam_1^2(\Mtilde_t)}{\n\d(\d-\rank)}\norm{\V\V^T - \widetilde\V_{t}\widetilde\V_{t}^T}_F^2
\cr
&&\le   \frac{4\blam_1^2(\Mtilde_t)}{\n\d(\d-\rank)}\norm{\V\O-\widetilde\V_{t}}_F^2
\cr
&&=   \frac{4\blam_1^2(\Mtilde_t)}{\n\d(\d-\rank)}\sum_{i=1}^\rank\norm{\V_i\O_i-\bv_i(\Mtilde_t)}_2^2
\cr
&&=  o_p\(\frac{h_\n}{\p\d^2}\),
\end{eqnarray}
where the  fourth and sixth lines are  due to Proposition 2.2 in \cite{vu2013}, and the last line holds from \eqref{v-v}.

The three results above \eqref{lam-lam}, \eqref{lam2-lam2}, and \eqref{rhohat-npsig2} give $\frac{1}{\sqrt{\n\d}}\left| \sqrt{\blam_i^2(\Mtilde_t) - \alphatilde_{t}} - \lam_i \right|=o_p\(\sqrt{\frac{h_\n}{\p\d}}\)$.

Therefore, combining the results above, we have that $\frac{1}{\sqrt{\n\d}}\norm{\eta_{t+1}}_F$ in \eqref{eta_t+1} is $o_p\(\sqrt{\frac{h_\n}{\p\d}}\)$.
Since $\frac{1}{\sqrt{\n\d}}\norm{\eta_{1}}_F=o_p\(\sqrt{\frac{h_\n}{\p\d}}\)$ by Proposition \ref{MhatConsistent}, 
	we have $\frac{1}{\sqrt{\n\d}}\norm{\eta_{t}}_F=o_p\(\sqrt{\frac{h_\n}{\p\d}}\)$ for any fixed $t$ by mathematical induction.
\end{proof}

\subsection{Proof of Theorem \ref{solutionZ}}
\begin{proof}[Proof of Theorem \ref{solutionZ}]
We have 
\begin{eqnarray} \label{targetOptim}
&&\min_Z \;\frac{1}{2\n\d}\norm{X-Z}_F^2 + \sum_{i=1}^\d \frac{\tau_i}{\sqrt{\n\d}} \frac{\blam_i(Z)}{\sqrt{\n\d}}
\cr
&&= \min_Z \;
	\frac{1}{2\n\d} \Bigg\{ \norm{X}_F^2 -2\sum_{i=1}^\d \tilde\lam_i\cdot\tilde u_i^TX\tilde v_i + \sum_{i=1}^\d\tilde\lam_i^2\Bigg\}
		+ \frac{1}{\n\d} \sum_{i=1}^\d \tau_i\tilde\lam_i,
\end{eqnarray}
where $\tilde\lam_i=\blam_i(Z)$, $\tilde u_i=\bu_i(Z)$, and $\tilde v_i=\bv_i(Z)$. 
Minimizing \eqref{targetOptim} is equivalent to minimizing
\begin{eqnarray*}
-2\sum_{i=1}^\d \tilde\lam_i\cdot\tilde u_i^TX\tilde v_i + \sum_{i=1}^\d\tilde\lam_i^2
		+ \sum_{i=1}^\d 2\tau_i\tilde\lam_i,
\end{eqnarray*}
with respect to $\tilde\lam_i,\tilde u_i,$ and $\tilde v_i, \; i=1,\ldots,\d$,
under the conditions that $(\tilde u_1,\ldots,\tilde u_\d)^T(\tilde u_1,\ldots,\tilde u_\d)=I_\d$,
	$(\tilde v_1,\ldots,\tilde v_\d)^T(\tilde v_1,\ldots,\tilde v_\d)=I_\d$, and $\tilde\lam_1 \ge \tilde\lam_2 \geq \ldots \ge \tilde\lam_d \ge 0$.
Thus, we have
\begin{eqnarray} \label{targetOptim-final}
&&\min_{\tilde\lam_i\ge0,\tilde u_i,\tilde v_i, \; i=1,\ldots,\d} \;
	-2\sum_{i=1}^\d \tilde\lam_i\cdot\tilde u_i^TX\tilde v_i + \sum_{i=1}^\d\tilde\lam_i^2
		+ \sum_{i=1}^\d 2\tau_i\tilde\lam_i
\cr
&&= \min_{\tilde\lam_i\ge0, \; i=1,\ldots,\d} \;
	-2\sum_{i=1}^\d \tilde\lam_i\cdot\blam_i(X) + \sum_{i=1}^\d\tilde\lam_i^2
		+ \sum_{i=1}^\d 2\tau_i\tilde\lam_i
\cr
&&= \min_{\tilde\lam_i\ge0, \; i=1,\ldots,\d} \;
	\sum_{i=1}^\d \left\{ \tilde\lam_i^2 -2 \tilde\lam_i \[\blam_i(X) - \tau_i\] \right\},
\end{eqnarray} 
where the first equality is due to the facts that $\tilde\lam_1\ge\ldots\ge\tilde\lam_\d\ge0$, and  for every $i$,  the problem
$$\max_{\norm{u_i}_2^2\le1, \norm{v_i}_2^2\le1} u_i ^TXv_i 
	\quad\text{such that}\quad 
	u_i \perp\{\tilde u_1^\ast,\ldots,\tilde u_{i-1}^\ast\}, v_i \perp\{\tilde v_1^\ast,\ldots,\tilde v_{i-1}^\ast\}$$ 
is solved by $\tilde u_i^\ast, \tilde v_i^\ast$, the left and right singular vectors of $X$ corresponding to the $i$-th largest singular value of $X$. 
Note that $\tilde u_i = \tilde u_i ^\ast$.
Since \eqref{targetOptim-final} is a quadratic function of $\tilde\lam_i$, the solution to the problem \eqref{targetOptim-final} is then $\tilde\lam_i = \(\blam_i(X) - \tau_i\)_+$.
\end{proof}


\subsection{Proof of Theorem \ref{thm:Zsolvesf}}
To ease the notation, we drop the superscript `g' in $Z_t^g$, $\Mtilde_t^g$, and $D_t^g$ in this section.

\begin{lemma}\label{z-zgoto0}
Let $Z_{t+1}:=\argmin_{Z\in\real^{\n\times\d}} Q_\tau(Z|Z_t)$ in \eqref{eq:Q}. Then, under Assumption \ref{assume2}, we have
$$\norm{Z_{t+1}-Z_{t}}_F^2 \to 0 \quad\text{as } t\to\infty.$$
\end{lemma}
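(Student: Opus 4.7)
\noindent\textbf{Proof proposal for Lemma \ref{z-zgoto0}.}

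The plan is to show that $f_\tau$ is a Lyapunov function for the iteration whose decrement controls $\|P_\Omega^\perp(Z_t-Z_{t+1})\|_F^2$, and then to read Assumption \ref{assume2} as the statement that the next step size is bounded by this quantity. First I would write the orthogonal decomposition
$$
\widetilde{M}_t - Z_{t+1} \;=\; \P(\Mp) + \Portho(Z_t) - Z_{t+1} \;=\; \P(\Mp - Z_{t+1}) \;+\; \Portho(Z_t - Z_{t+1}),
$$
whose two summands live on complementary index sets, so the Frobenius norms add. Consequently
$$
Q_\tau(Z_{t+1}\mid Z_t) \;=\; f_\tau(Z_{t+1}) \;+\; \frac{1}{2\n\d}\,\smallnorm{\Portho(Z_t - Z_{t+1})}_F^{2}.
$$
Since $\widetilde{M}_t - Z_t = \P(\Mp - Z_t)$, we also have $Q_\tau(Z_t\mid Z_t) = f_\tau(Z_t)$.

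Next I would invoke optimality: because $Z_{t+1}$ minimizes $Q_\tau(\cdot\mid Z_t)$, $Q_\tau(Z_{t+1}\mid Z_t) \le Q_\tau(Z_t\mid Z_t) = f_\tau(Z_t)$. Combining with the identity above yields the monotone-descent inequality
$$
f_\tau(Z_t) \;-\; f_\tau(Z_{t+1}) \;\ge\; \frac{1}{2\n\d}\,\smallnorm{\Portho(Z_t - Z_{t+1})}_F^{2}.
$$
Since $f_\tau(Z) \ge 0$ for every $Z$, the sequence $\{f_\tau(Z_t)\}$ is nonincreasing and bounded below, hence convergent. Telescoping the descent inequality gives $\sum_{t\ge 1} \smallnorm{\Portho(Z_t - Z_{t+1})}_F^{2} < \infty$; in particular $\smallnorm{\Portho(Z_t - Z_{t+1})}_F^{2} \to 0$ as $t\to\infty$.

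The final step is to convert Assumption \ref{assume2} into a usable form. Using $D_t = \widetilde{M}_t - Z_{t+1}$ I would compute
$$
(D_t - D_{t+1}) + (Z_{t+1} - Z_{t+2}) \;=\; \widetilde{M}_t - \widetilde{M}_{t+1} \;=\; \Portho(Z_t - Z_{t+1}),
$$
so that expanding the square on the left gives
$$
\smallnorm{D_t - D_{t+1}}_F^{2} + 2\langle D_t - D_{t+1},\, Z_{t+1}-Z_{t+2}\rangle \;=\; \smallnorm{\Portho(Z_t - Z_{t+1})}_F^{2} - \smallnorm{Z_{t+1} - Z_{t+2}}_F^{2}.
$$
Assumption \ref{assume2} therefore reduces to $\smallnorm{Z_{t+1} - Z_{t+2}}_F^{2} \le \smallnorm{\Portho(Z_t - Z_{t+1})}_F^{2}$ for all $t \ge 1$, and combining this with the convergence established in the previous paragraph yields $\smallnorm{Z_{t+1}-Z_{t+2}}_F^2 \to 0$, which is the conclusion after a trivial index shift.

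The proof is essentially a sequence of identities plus monotonicity, and the one place that requires actual insight is recognizing that the bilinear-looking inequality of Assumption \ref{assume2} collapses to the clean comparison $\smallnorm{Z_{t+1}-Z_{t+2}}_F^{2} \le \smallnorm{\Portho(Z_t - Z_{t+1})}_F^{2}$ via the identity $(D_t - D_{t+1}) + (Z_{t+1}-Z_{t+2}) = \Portho(Z_t - Z_{t+1})$; this is the step I would check most carefully.
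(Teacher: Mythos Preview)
Your argument is correct and uses the same two ingredients as the paper's proof: the majorization--minimization descent identity $Q_\tau(Z\mid Z_t)=f_\tau(Z)+\frac{1}{2nd}\|\Portho(Z_t-Z)\|_F^2$, and the algebraic rewriting of Assumption~\ref{assume2} via $(D_t-D_{t+1})+(Z_{t+1}-Z_{t+2})=\Portho(Z_t-Z_{t+1})$. The difference is purely in the order of play. The paper first uses Assumption~\ref{assume2} to obtain monotonicity $\|Z_{t+1}-Z_t\|_F^2\le\|\Portho(Z_{t-1}-Z_t)\|_F^2\le\|Z_t-Z_{t-1}\|_F^2$, deduces that the sequence $\{\|Z_{t+1}-Z_t\|_F^2\}$ converges, sandwiches to get $\|\P(Z_t-Z_{t-1})\|_F^2\to 0$, and only then invokes the descent of $f_\tau$ to handle the orthogonal part. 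You instead telescope the descent inequality first to get $\sum_t\|\Portho(Z_t-Z_{t+1})\|_F^2<\infty$ directly, and then apply Assumption~\ref{assume2} once to conclude. Your route is shorter, avoids the intermediate monotonicity/sandwich step, and in fact yields the slightly stronger conclusion that $\|Z_{t+1}-Z_t\|_F^2$ is summable rather than merely vanishing.
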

\begin{proof}[Proof of Lemma \ref{z-zgoto0}]
By the construction of $D_t$, 
$$(\Mtilde_{t-1}-\Mtilde_t) - (Z_t - Z_{t+1}) - (D_{t-1} - D_{t}) =0.$$
Thus, we have 
\begin{eqnarray} \label{xZ-Z}
\langle\Mtilde_{t-1}-\Mtilde_t,Z_t - Z_{t+1}\rangle - \langle Z_t - Z_{t+1},Z_t - Z_{t+1}\rangle - \langle D_{t-1} - D_{t},Z_t - Z_{t+1}\rangle =0
\end{eqnarray}
and
\begin{eqnarray}\label{xM-M}
\langle\Mtilde_{t-1}-\Mtilde_t,\Mtilde_{t-1}-\Mtilde_t\rangle - \langle Z_t - Z_{t+1},\Mtilde_{t-1}-\Mtilde_t\rangle - \langle D_{t-1} - D_{t},\Mtilde_{t-1}-\Mtilde_t\rangle =0.
\end{eqnarray}
Add \eqref{xM-M} and \eqref{xZ-Z}, and 
\begin{eqnarray}\label{xZ-Z:xM-M}
&&0 =\smallnorm{\Mtilde_{t-1}-\Mtilde_t}_F^2 - \smallnorm{Z_t-Z_{t+1}}_F^2 - \langle D_{t-1} - D_{t}, Z_t + \Mtilde_{t-1} - (Z_{t+1}+\Mtilde_t)\rangle
\cr
&&=\smallnorm{\Mtilde_{t-1}-\Mtilde_t}_F^2 - \smallnorm{Z_t-Z_{t+1}}_F^2 - \norm{D_{t-1} - D_{t}}_F^2 - 2\langle D_{t-1} - D_{t},Z_t - Z_{t+1} \rangle.
\end{eqnarray}
Under Assumption \ref{assume2}, \eqref{xZ-Z:xM-M} gives  
\begin{eqnarray*}
\norm{Z_t-Z_{t+1}}_F^2 \le \norm{\Mtilde_{t-1} - \Mtilde_t}_F^2,
\end{eqnarray*}
and thus
\begin{eqnarray} \label{lem:z-z<z-z}
\norm{Z_{t+1}-Z_{t}}_F^2 
&\le& \norm{\Mtilde_{t-1} - \Mtilde_t}_F^2
\cr
&\le& \norm{\Portho\( Z_{t-1} - Z_t \)}_F^2
\cr
&\le& \norm{Z_{t}-Z_{t-1}}_F^2 
\end{eqnarray}
for all $t\ge1$.
This proves that the sequence $\{\norm{Z_{t+1}-Z_{t}}_F^2\}$ converges (since it is decreasing and bounded below).

The convergence of $\{\norm{Z_{t+1}-Z_{t}}_F^2\}$ gives 
$$\norm{Z_{t+1}-Z_{t}}_F^2 - \norm{Z_{t}-Z_{t-1}}_F^2 \to 0 \text{ as } t\to\infty.$$
Then, by \eqref{lem:z-z<z-z},
\begin{eqnarray*}
0
&\ge& \norm{\Portho\( Z_{t}-Z_{t-1} \)}_F^2 - \norm{Z_{t}-Z_{t-1}}_F^2 
\cr 
&\ge& \norm{Z_{t+1}-Z_{t}}_F^2 - \norm{Z_{t}-Z_{t-1}}_F^2 
\cr 
&\to& 0 \quad\text{as } t \to \infty,
\end{eqnarray*}
which implies
\begin{eqnarray} \label{Pitself}
\norm{\Portho\( Z_{t}-Z_{t-1} \)}_F^2 - \norm{Z_{t}-Z_{t-1}}_F^2 \to 0 \Rightarrow \norm{\P\( Z_{t}-Z_{t-1} \)}_F^2 \to 0.
\end{eqnarray}

Furthermore, similarly to the proof of Lemma 2 in \cite{mazumder2010}, we can show
\begin{eqnarray} \label{fQf}
 f_{\tau}(Z_{t}) 
\ge Q_{\tau}(Z_{t+1}|Z_t)
\ge Q_{\tau}(Z_{t+1}|Z_{t+1})
= f_{\tau}(Z_{t+1}) \ge0
\end{eqnarray}
for every fixed $\tau_1,\ldots,\tau_\d >0$ and $t\ge 1$.
Thus, we have
$$ Q_{\tau}(Z_{t+1}|Z_t) - Q_{\tau}(Z_{t+1}|Z_{t+1}) \to 0 \quad\text{as } t\to\infty,$$
which implies 
$$\norm{\Portho\( Z_{t}-Z_{t+1} \)}_F^2 \to 0 \quad\text{as } t\to\infty.$$
The above along with \eqref{Pitself} gives 
$$\norm{Z_{t+1}-Z_{t}}_F^2 \to 0 \quad\text{as } t\to\infty.$$
\end{proof}

\begin{proof}[Proof of Theorem \ref{thm:Zsolvesf}]
By the construction of $D_{t}$, we have 
$$0 =  \(\Mtilde_{t} - Z_{t+1}\)-D_{t} \quad \text{for all } t\ge1.$$
Since $Z_\infty$ is a limit point of the sequence $Z_t$, there exists a subsequence $\{n_t\} \subset \{1,2,\ldots\}$ such that $Z_{n_t}\to Z_\infty$ as $t\to\infty$.
By Lemma \ref{z-zgoto0}, this subsequence $Z_{n_t}$ satisfies
$$Z_{n_t} - Z_{n_t +1}\to0$$
which implies
$$\Portho(Z_{n_t}) - Z_{n_t+1}\to \Portho(Z_\infty) - Z_\infty=-\P(Z_\infty).$$
Hence, 
\begin{eqnarray}\label{eq 01 : Thm conv}
	D_{n_t} = \(\P(\Mp)+\Portho(Z_{n_t})\)  - Z_{n_t+1} \to \P(\Mp)-\P(Z_\infty) = D_\infty.
\end{eqnarray}
 
Due to \eqref{assume4} and \eqref{eq 01 : Thm conv}, we have
\begin{eqnarray*}
	f_\tau ( Z^s) &\geq& f_\tau (Z_\infty) - \frac{1}{\n\d} \langle Z^s - Z_\infty, \P(\Mp)-\P(Z_\infty)-  D _\infty \rangle  \cr
	&=&f_\tau (Z_\infty).
\end{eqnarray*}
Since $f_\tau ( Z^s) \le f_\tau (Z_\infty)$ by definition of $Z^s$, we have $f_\tau ( Z^s) = f_\tau (Z_\infty)$.
Lastly, by \eqref{fQf}, we have $\lim_{t\rightarrow \infty} f_{\tau}(Z_{t}) = f(Z^s)$.
\end{proof}

\subsection{Proofs of Lemmas \ref{lem:undateTuning}-\ref{corol:assume2}}
\begin{proof}[Proof of Lemma \ref{lem:undateTuning}]
For $i=1,\ldots,\rank$, we have
\begin{eqnarray*}
&&\left| \frac{\tau_{t,i}}{\sqrt{\n\d}} - \frac{\tau_{t+1,i}}{\sqrt{\n\d}} \right|
\cr
&&= \frac{1}{\sqrt{\n\d}}\left|\blam_i (\Mtilde_t) - \sqrt{\blam_i^2 (\Mtilde_t) - \alphatilde_{t}} 
	- \lam_i (\Mtilde_{t+1}) + \sqrt{\blam_i^2 (\Mtilde_{t+1}) - \alphatilde_{t+1}}\right|
\cr
&&\le \frac{1}{\sqrt{\n\d}}\left|\blam_i (\Mtilde_t) - \(\sqrt{\lam_i^2 - \n\p\sig^2}\) \right|
	+ \frac{1}{\sqrt{\n\d}}\left|\sqrt{\blam_i^2 (\Mtilde_t) - \alphatilde_{t}} - \lam_i^2 \right|
\cr
&&\quad
	+ \frac{1}{\sqrt{\n\d}}\left|\blam_i (\Mtilde_{t+1}) - \(\sqrt{\lam_i^2 - \n\p\sig^2}\) \right|
	+ \frac{1}{\sqrt{\n\d}}\left|\sqrt{\blam_i^2 (\Mtilde_{t+1}) - \alphatilde_{t+1}} - \lam_i^2 \right|
\cr
&&= (I)+(II)+(III)+(IV).
\end{eqnarray*}
Then, by \eqref{lam2-lam2} and \eqref{rhohat-npsig2}, we have
\begin{eqnarray*}
(I) &=& \frac{1}{\sqrt{\n\d}}\left|\blam_i (\Mtilde_t) - \(\sqrt{\lam_i^2 - \n\p\sig^2}\) \right|
\cr
&=&\frac{1}{2\lam_\ast\sqrt{\n\d}}\left|\blam_i^2 (\Mtilde_t) - \(\lam_i^2 - \n\p\sig^2\) \right|
\cr
&\le&\frac{1}{2\lam_\ast\sqrt{\n\d}}\left| \blam_i^2(\Mtilde_t) - \alphatilde_{t} - \lam_i^2 \right|
	+ \frac{1}{2\lam_\ast\sqrt{\n\d}}\left| \alphatilde_{t} - \n\p\sig^2 \right|
\cr
&=&o_p\(\sqrt{\frac{h_\n}{\p\d}}\),
\end{eqnarray*}
where the second equality holds for some $\lam_\ast$ between $\blam_i (\Mtilde_t)$ and $\sqrt{\lam_i^2 - \n\p\sig^2}$
	by Taylor's expansion.
We can similarly show that $(III)=o_p\(\sqrt{h_\n/\p\d}\)$.
Both of $(II)$ and $(IV)$ are also $o_p\(\sqrt{h_\n/\p\d}\)$ by \eqref{lam-lam} and \eqref{lam2-lam2}.
\end{proof}

\begin{proof}[Proof of Lemma \ref{corol:assume2}]
 From Theorem \ref{thm:mathInduct} and the construction of $D_t$ in Assumption \ref{assume2}, we have
\begin{eqnarray*}
&& \left|\frac{1}{nd}\langle D_t - D_{t+1},Z_{t+1} - Z_{t+2} \rangle \right|
\cr
&&\le \frac{1}{nd} \norm{D_t - D_{t+1}}_F   \norm{Z_{t+1} - Z_{t+2}}_F
\cr
&&\le \frac{1}{nd} \norm{\Mtilde_{t} - Z_{t+1} - \(\Mtilde_{t+1} - Z_{t+2}\) }_F  \norm{Z_{t+1} - Z_{t+2}}_F
\cr
&&\le  \frac{1}{nd} \Big\{ \norm{\Mtilde_{t} - \Mtilde_{t+1}}_F + \norm{ Z_{t+1} - Z_{t+2} }_F \Big\}  \norm{Z_{t+1} - Z_{t+2}}_F
\cr
&&=  \frac{1}{nd} \Big\{ \norm{\Portho\(Z_t - Z_{t+1}\) }_F + \norm{ Z_{t+1} - Z_{t+2} }_F \Big\}   \norm{Z_{t+1} - Z_{t+2}}_F
\cr
&&\le \frac{1}{nd} \Big\{ \norm{Z_t - Z_{t+1}}_F + \norm{ Z_{t+1} - Z_{t+2} }_F \Big\}   \norm{Z_{t+1} - Z_{t+2}}_F
\cr
&&\le \frac{1}{nd} \Big\{ \norm{Z_{t} - \M}_F + 2\norm{Z_{t+1} - \M}_F + \norm{Z_{t+2} - \M}_F \Big\}
\cr
&&    \quad  \times \Big\{ \norm{Z_{t+1} - \M}_F + \norm{Z_{t+2} - \M}_F \Big\}
\cr
&&=o_p\(\frac{h_\n}{\p\d}\).
\end{eqnarray*}
\end{proof}










\bibliographystyle{Chicago}

\bibliography{Bibliography-MM-MC}
\end{document}